\providecommand{\tabularnewline}{\\}
\theoremstyle{plain}
\newtheorem{assumption}{\protect\assumptionname}
\theoremstyle{plain}
\newtheorem*{assumption*}{\protect\assumptionname}
\theoremstyle{remark}
\newtheorem{rem}{\protect\remarkname}
\theoremstyle{plain}
\newtheorem{lem}{\protect\lemmaname}
\theoremstyle{plain}
\newtheorem{thm}{\protect\theoremname}
\theoremstyle{remark}
\newtheorem*{rem*}{\protect\remarkname}
\newcommand{\customlabel}[2]{%
   \protected@write \@auxout {}{\string \newlabel {#1}{{#2}{\thepage}{#2}{#1}{}} }%
   \hypertarget{#1}{}
}
\providecommand{\assumptionname}{Assumption}
\providecommand{\lemmaname}{Lemma}
\providecommand{\remarkname}{Remark}
\providecommand{\theoremname}{Theorem}
\begin{document}
\global\long\def\a{\alpha}%
 
\global\long\def\b{\beta}%
 
\global\long\def\g{\gamma}%
 
\global\long\def\d{\delta}%
 
\global\long\def\e{\epsilon}%
 
\global\long\def\l{\lambda}%
 
\global\long\def\t{\theta}%
 
\global\long\def\o{\omega}%
 
\global\long\def\s{\sigma}%

\global\long\def\G{\Gamma}%
 
\global\long\def\D{\Delta}%
 
\global\long\def\L{\Lambda}%
 
\global\long\def\T{\Theta}%
 
\global\long\def\O{\Omega}%
 
\global\long\def\R{\mathbb{R}}%
 
\global\long\def\N{\mathbb{N}}%
 
\global\long\def\Q{\mathbb{Q}}%
 
\global\long\def\I{\mathbb{I}}%
 
\global\long\def\P{\mathbb{P}}%
 
\global\long\def\E{\mathbb{E}}%
\global\long\def\B{\mathbb{\mathbb{B}}}%
\global\long\def\S{\mathbb{\mathbb{S}}}%

\global\long\def\X{{\bf X}}%
\global\long\def\cX{\mathscr{X}}%
 
\global\long\def\cY{\mathscr{Y}}%
 
\global\long\def\cA{\mathscr{A}}%
 
\global\long\def\cB{\mathscr{B}}%
 
\global\long\def\cM{\mathscr{M}}%
\global\long\def\cN{\mathcal{N}}%
\global\long\def\cG{\mathcal{G}}%
\global\long\def\cC{\mathcal{C}}%
\global\long\def\sp{\,}%

\global\long\def\es{\emptyset}%
 
\global\long\def\mc#1{\mathscr{#1}}%
 
\global\long\def\ind{\mathbf{\mathbbm1}}%
\global\long\def\indep{\perp}%

\global\long\def\any{\forall}%
 
\global\long\def\ex{\exists}%
 
\global\long\def\p{\partial}%
 
\global\long\def\cd{\cdot}%
 
\global\long\def\Dif{\nabla}%
 
\global\long\def\imp{\Rightarrow}%
 
\global\long\def\iff{\Leftrightarrow}%

\global\long\def\up{\uparrow}%
 
\global\long\def\down{\downarrow}%
 
\global\long\def\arrow{\rightarrow}%
 
\global\long\def\rlarrow{\leftrightarrow}%
 
\global\long\def\lrarrow{\leftrightarrow}%
\global\long\def\gto{\rightarrow}%

\global\long\def\abs#1{\left|#1\right|}%
 
\global\long\def\norm#1{\left\Vert #1\right\Vert }%
 
\global\long\def\rest#1{\left.#1\right|}%

\global\long\def\bracket#1#2{\left\langle #1\middle\vert#2\right\rangle }%
 
\global\long\def\sandvich#1#2#3{\left\langle #1\middle\vert#2\middle\vert#3\right\rangle }%
 
\global\long\def\turd#1{\frac{#1}{3}}%
 
\global\long\def\ellipsis{\textellipsis}%
 
\global\long\def\sand#1{\left\lceil #1\right\vert }%
 
\global\long\def\wich#1{\left\vert #1\right\rfloor }%
 
\global\long\def\sandwich#1#2#3{\left\lceil #1\middle\vert#2\middle\vert#3\right\rfloor }%

\global\long\def\abs#1{\left|#1\right|}%
 
\global\long\def\norm#1{\left\Vert #1\right\Vert }%
 
\global\long\def\rest#1{\left.#1\right|}%
 
\global\long\def\inprod#1{\left\langle #1\right\rangle }%
 
\global\long\def\ol#1{\overline{#1}}%
 
\global\long\def\ul#1{\underline{#1}}%
 
\global\long\def\td#1{\tilde{#1}}%

\global\long\def\upto{\nearrow}%
 
\global\long\def\downto{\searrow}%
 
\global\long\def\pto{\overset{p}{\longrightarrow}}%
 
\global\long\def\dto{\overset{d}{\longrightarrow}}%
 
\global\long\def\asto{\overset{a.s.}{\longrightarrow}}%

\title{Logical Differencing in Dyadic Network Formation Models with Nontransferable
Utilities}
\author{Wayne Yuan Gao\thanks{Gao (corresponding author): Department of Economics, University of
Pennsylvania, 133 S 36th St., Philadelphia, PA 19104, USA, waynegao@upenn.edu.}, Ming Li\thanks{Li: Department of Economics, National University of Singapore, 1 Arts
Link AS2, Singapore 117570, mli@nus.edu.sg.}, and Sheng Xu\thanks{Xu: Department of Statistics and Data Science, Yale University, 24
Hillhouse Ave., New Haven, CT 06511, USA, sheng.xu@yale.edu.}}
\maketitle
\begin{abstract}
\noindent This paper considers a semiparametric model of dyadic network
formation under nontransferable utilities (NTU). Such dyadic links
arise frequently in real-world social interactions that require bilateral
consent but by their nature induce additive non-separability. In our
model we show how unobserved individual heterogeneity in the network
formation model can be canceled out without requiring additive separability.
The approach uses a new method we call \emph{logical differencing}.
The key idea is to construct an observable event involving the intersection
of two mutually exclusive restrictions on the fixed effects, while
these restrictions are as necessary conditions of weak multivariate
monotonicity. Based on this identification strategy we provide consistent
estimators of the network formation model under NTU. Finite-sample
performance of our method is analyzed in a simulation study, and an
empirical illustration using the risk-sharing network data from Nyakatoke
demonstrates that our proposed method is able to obtain economically
intuitive estimates.
\end{abstract}
\textbf{Keywords:} dyadic network formation, semiparametric estimation,
nontransferable utilities, additive nonseparability

\newpage{}

\section{\label{sec:C5_Intro}Introduction}

This paper considers a semiparametric model of dyadic network formation
under \emph{nontransferable utilities} (NTU), which arise naturally
in the modeling of real-world social interactions that require bilateral
consent. For instance, friendship is usually formed only when both
individuals in question are willing to accept each other as a friend,
or in other words, when both individuals derive sufficiently high
utilities from establishing the friendship. It is often plausible
that the two individuals may derive very different utilities from
the friendship for a variety of reasons: for example, one of them
may simply be more introvert than the other and derive lower utilities
from the friendship. In addition, there may not be a feasible way
to perfectly transfer utilities between the two individuals. Monetary
payments may not be customary in many social contexts, and even in
the presence of monetary or in-kind transfers, \emph{utilities} may
not be perfectly transferable through these feasible forms of transfers,
say, when individuals have different marginal utilities with respect
to these transfers.\footnote{See surveys by \citet{aumann1967survey}, \citet*{hart1985nontransferable}
and \citet*{mclean2002values} for discussions on the implications
of NTU on link (bilateral relationship) and group formation from a
micro-theoretical perspective.} Given the considerable academic and policy interest in understanding
the underlying drivers of network formation,\footnote{For example, the formation of friendship among U.S. high-school students
has been studied by a long line of literature, such as \citet*{moody2001race},
\citet*{currarini2009economic,currarini2010identifying}, \citet*{boucher2015structural},
\Citet{currarini2016simple}, \citet{xu2018diverse} among others. } it is not only theoretically interesting but also empirically relevant
to incorporate NTU in the modeling of network formation.

This paper contributes to the line of econometric literature on network
formation by introducing and incorporating \emph{nontransferable utilities}
into dyadic network formation models. Previous work in this line of
literature focuses primarily on case of \emph{transferable utilities},
as represented in \citet*{graham2017econometric}, which considers
a parametric model with homophily effects and individual unobserved
heterogeneity of the following form:

\begin{equation}
D_{ij}=\ind\left\{ w\left(X_{i},X_{j}\right)^{'}\b_{0}+A_{i}+A_{j}\geq\e_{ij}\right\} \label{eq:Graham}
\end{equation}
where $D_{ij}$ is an observable binary variable that denotes the
presence or absence of a link between individual $i$ and $j$, $w\left(X_{i},X_{j}\right)$
represents a (symmetric) vector of pairwise observable characteristics
specific to $ij$ generated by a known function $w$ of the individual
observable characteristics $X_{i}$ and $X_{j}$ of $i$ and $j$,
while $A_{i}$ and $A_{j}$ stand for unobserved individual-specific
degree heterogeneity and $\e_{ij}$ is some idiosyncratic utility
shock. Model \eqref{eq:Graham} essentially says that, if the (stochastic)\emph{
joint surplus} generated by a bilateral link $s_{ij}:=w\left(X_{i},X_{j}\right)^{'}\b_{0}+A_{i}+A_{j}-\e_{ij}$
exceeds the threshold zero, then the link between $i$ and $j$ is
formed. The model implicitly assumes that the link surplus can be
freely distributed among the two individuals $i$ and $j$, and that
bargaining efficiency is always achieved, so that the undirected link
is formed if and only if the link surplus is positive. Given this
specification, \citet*{graham2017econometric} provides consistent
and asymptotically normal maximum-likelihood estimates for the homophily
effect parameter $\b_{0}$, assuming that the exogenous idiosyncratic
pairwise shocks $\epsilon_{ij}$ are independently and identically
distributed with a logistic distribution. Recently, \citet*{Candelaria2016}
and \citet*{Toth2017} provide semiparametric generalizations of \citet*{graham2017econometric},
while \citet*{gao2018nonparametric} established nonparametric identification
of a class of index models that further generalize \eqref{eq:Graham}.

This paper, however, generalizes \citet*{graham2017econometric} along
a different direction, and seeks to incorporate the natural micro-theoretical
feature of NTU into this class of network formation models. To illustrate\footnote{Starting from Section \ref{sec:NetForm}, we consider a more general
specification than the illustrative model \eqref{eq:Graham} introduced
here.}, consider the following simple adaption of model \eqref{eq:Graham}
with two threshold-crossing conditions:

\begin{equation}
D_{ij}=\ind\left\{ w\left(X_{i},X_{j}\right)^{'}\b_{0}+A_{i}\geq\e_{ij}\right\} \cd\ind\left\{ w\left(X_{i},X_{j}\right)^{'}\b_{0}+A_{j}\geq\e_{ji}\right\} ,\label{eq:C4_ExNTU}
\end{equation}
where the unobserved individual heterogeneity $A_{i}$ and $A_{j}$
\emph{separately }enter into two different threshold-crossing conditions.
This formulation could be relevant to scenarios where $A_{i}$ represents
individual $i$'s own intrinsic valuation of a generic friend: for
a relatively shy or introvert person $i$, a lower $A_{i}$ implies
that $i$ is less willing to establish a friendship link, regardless
of how sociable the counterparty is. For simplicity, suppose for now
that $w\left(X_{i},X_{j}\right)\equiv{\bf 0}$, $\e_{ij}\sim_{iid}F$,
$\epsilon_{ji}\sim_{iid}F$ and $\epsilon_{ij}\perp\epsilon_{ji}$\footnote{For our general result, we do not require $\epsilon_{ij}\perp\epsilon_{ji}$,
nor the log-concavity of $F$. They are used here for illustration
purpose only. See a discussion after \eqref{eq:example ntu link}.}. Focusing completely on the effects of $A_{i}$ and $A_{j}$, it
is clear that the TU model \eqref{eq:Graham} implies that only the
sum of ``sociability'', $A_{i}+A_{j}$, matters: the linking probability
among pairs with $A_{i}=A_{j}=1$ (two moderately social persons)
should be exactly the same as the linking probability among pairs
with $A_{i}=2$ and $A_{j}=0$ (one very social person and one very
shy person), which might not be reasonable or realistic in social
scenarios. In comparison, the linking probability among pairs with
$A_{i}=2$ and $A_{j}=0$ is lower than the linking probability among
pairs with $A_{i}=A_{j}=1$ under the NTU model \eqref{eq:C4_ExNTU}
with i.i.d. $\epsilon_{ij}$ and $\epsilon_{ji}$ that follow any
log-concave distribution\footnote{A distribution is log-concave if $F\left(x\right)^{\lambda}F\left(y\right)^{1-\lambda}\leq F\left(\lambda x+\left(1-\lambda\right)y\right)$.
Many commonly used distributions, such as uniform, normal, exponential,
logistic, chi-squared distributions, are log-concave. See \citet{bagnoli2005log}
for more details on log-concave distributions from a microeconomic
theoretical perspective.}:
\begin{align*}
 & \E\left[\rest{D_{ij}}w\left(X_{i},X_{j}\right)\equiv{\bf 0},A_{i}=2,A_{j}=0\right]\\
=\  & F\left(0\right)F\left(2\right)\\
<\  & F\left(1\right)F\left(1\right)\\
=\  & \E\left[\rest{D_{ij}}w\left(X_{i},X_{j}\right)\equiv{\bf 0},A_{i}=A_{j}=1\right]
\end{align*}
This is intuitive given the observation that, under bilateral consent,
the party with relatively lower utility is the pivotal one in link
formation. Moreover, even though we maintain strict monotonicity in
the unobservable characteristics $A_{i}$ and $A_{j}$, the NTU setting
can still effectively incorporate homophily effects on unobserved
heterogeneity: given that $w\left(X_{i},X_{j}\right)\equiv{\bf 0}$
and $A_{i}+A_{j}=2$, the linking probability is effectively decreasing
in $\left|A_{i}-A_{j}\right|$ under log-concave $F$. Hence, by explicitly
modeling NTU in dyadic network formation, we can accommodate more
flexible or realistic patterns of conditional linking probabilities
and homophily effects that are not present under the TU setting.

However, the NTU setting immediately induces a key technical complication:
as can be seen explicitly in model \eqref{eq:C4_ExNTU}, the observable
indexes, $w\left(X_{i},X_{j}\right)^{'}\b_{0}$ and $w\left(X_{j},X_{i}\right)^{'}\b_{0}$,
and the unobserved heterogeneity terms ($A_{i}$ and $A_{j}$) are
no longer additively separable from each other. In particular, notice
that, even though the utility specification for each individual inside
each of the two threshold-crossing conditions in model \eqref{eq:C4_ExNTU}
remains completely linear and additive, the multiplication of the
two (nonlinear) indicator functions directly destroys both linearity
and additive separability, rendering inapplicable most previously
developed econometric techniques that arithmetically ``difference
out'' the ``two-way fixed effects'' $A_{i}$ and $A_{j}$ based
on additive separability.\footnote{Equivalently, one could write model \eqref{eq:C4_ExNTU} in an alternative
form as a ``single'' \emph{composite} threshold-crossing condition:
\[
D_{ij}=\ind\left\{ \min\left\{ w\left(X_{i},X_{j}\right)^{'}\b_{0}+A_{i}-\e_{ij},w\left(X_{j},X_{i}\right)^{'}\b_{0}+A_{j}-\e_{ji}\right\} \geq0\right\} ,
\]
where additive separability is again lost in this alternative formulation.}

Given this technical challenge, this paper proposes a new identification
strategy termed \emph{logical differencing}, which helps cancel out
the unobserved heterogeneity terms, $A_{i}$ and $A_{j}$, without
requiring additive separability but leveraging the logical implications
of \emph{multivariate monotonicity} in model \eqref{eq:C4_ExNTU}.
The key idea is to construct an observable event involving the intersection
of two mutually exclusive restrictions on the fixed effects $A_{i}$
and $A_{j}$, which logically imply an event that can be represented
without $A_{i}$ or $A_{j}$. Specifically, in the context of the
illustrative model \eqref{eq:C4_ExNTU} above, we start by considering
the event where a given individual $\ol i$ is \emph{more popular}
than another individual $\ol j$ among a group of individuals $k$
with observable characteristics $X_{k}=\ol x$ while $\ol i$ is simultaneously
\emph{less popular }than another individual $\ol j$ among a group
of individuals with a certain realization of observable characteristics
$\ul x$. This is the same as the conditioning event in \citet*{Toth2017}
and analogous to the tetrad comparisons made in \citet{Candelaria2016}.
However, instead of using arithmetic differencing to cancel out the
unobserved heterogeneity $A_{\ol i}$ and $A_{\ol j}$ as in \citet{Candelaria2016}
and \citet*{Toth2017}, we make the following logical deductions based
on the monotonicity of the conditional popularity of $\ol i$ in $w\left(X_{\ol i},\ol x\right)^{'}\b_{0}$
and $A_{\ol i}$. First, the event that $\ol i$ is \emph{more popular}
than another individual $\ol j$ among the group of individuals with
$X_{k}=\ol x$ implies that either $w\left(X_{\ol i},\ol x\right)^{'}\b_{0}>w\left(X_{\ol j},\ol x\right)^{'}\b_{0}$
or $A_{\ol i}>A_{\ol j}$, while the event that $\ol i$ is \emph{less
popular} than another individual $\ol j$ among a different group
of individuals with $X_{l}=\ul x$ implies that either $w\left(X_{\ol i},\ul x\right)^{'}\b_{0}<w\left(X_{\ol j},\ul x\right)^{'}\b_{0}$
or $A_{\ol i}<A_{\ol j}$. Second, when both events occur simultaneously,
we can logically deduce that either $w\left(X_{\ol i},\ol x\right)^{'}\b_{0}>w\left(X_{\ol j},\ol x\right)^{'}\b_{0}$
or $w\left(X_{\ol i},\ul x\right)^{'}\b_{0}<w\left(X_{\ol j},\ul x\right)^{'}\b_{0}$
must have occurred, because $A_{\ol i}>A_{\ol j}$ and $A_{\ol i}<A_{\ol j}$
cannot simultaneously occur. Intuitively, the ``switch'' in the
relative popularity of $\ol i$ and $\ol j$ among the two groups
of individuals with characteristics $\ol x$ and $\ul x$ cannot be
driven by individual unobserved heterogeneity $A_{\ol i}$ and $A_{\ol j}$,
and hence when we indeed observe such a ``switch'', we obtain a
restriction on the parametric indices $w\left(X_{\ol i},\ol x\right)^{'}\b_{0}$,
$w\left(X_{\ol i},\ol x\right)^{'}\b_{0}$, $w\left(X_{\ol i},\ul x\right)^{'}\b_{0}$,
and $w\left(X_{\ol j},\ul x\right)^{'}\b_{0}$, which helps identify
$\b_{0}$.

Based on this identification strategy we provide sufficient conditions
for point identification of the parameter $\b_{0}$ up to scale normalization
as well as a consistent estimator for $\b_{0}$. Our estimator has
a two-step structure, with the first step being a standard nonparametric
estimator of conditional linking probabilities, which we use to assert
the occurrence of the conditioning event, while in the second step
we use the identifying restriction on $\b_{0}$ when the conditioning
event occurs. The computation of the estimator essentially follows
the same method proposed in \citet*{gao2018robust}, with some adaptions
to the network data setting. We plot the identified sets under various
restrictions on the support of the observable characteristics $X_{i}$,
analyze the finite-sample performance in a simulation study, and present
an empirical illustration of our method using data from Nyakatoke
on risk-sharing network collected by Joachim De Weerdt.

~

This paper belongs to the line of literature that studies dyadic network
formation in a single large network setting, including \citet*{blitzstein2011},
\citet*{chatterjee2011}, \citet*{yan2013}, \citet*{yan2016}, \citet*{graham2017econometric},
\citet*{charbonneau2017}, \citet*{dzemski2017empirical}, \citet*{jochmans2017semiparametric},
\citet*{yan2018statistical}, \citet*{Candelaria2016}, \citet*{Toth2017}
and \citet*{gao2018nonparametric}. \citet*{shi2016structural} explicitly
incorporates NTU into dyadic network formation models, but \citet*{shi2016structural}
considers a fully parametric model and establishes the consistency
and asymptotic normality of the maximum likelihood estimators. See
also the recent surveys by \citet*{de2020econometric} and \citet*{graham2020dyadic}.

This paper is also related to a line of research that utilizes dyadic
link formation models in order to study structural social interaction
models: for instance, \citet{arduini2015parametric}, \citet*{auerbach2016identification},
\citet*{goldsmith2013social}, \citet*{hsieh2016social} and \citet*{johnsson2021estimation}.
In these papers, the social interaction models are the main focus
of identification and estimation, while the link formation models
are used mainly as a tool (a control function) to deal with network
endogeneity or unobserved heterogeneity problems in the social interaction
model. Even though some of the network formation models considered
in this line of literature is consistent with the NTU setting, this
line of literature is usually not primarily concerned with the full
identification and estimation of the network formation model itself.

It should be pointed out that in this paper we do not consider link
interdependence in network formation, which is studied by the line
of econometric literature on strategic network formation models. This
line of literature primarily uses pairwise stability \citep*{Jackson1996}
as the solution concept for network formation, and also often builds
NTU into the econometric specification. See, for example, \citet*{de2018identifying},
\citet*{graham2016homophily}, \citet*{leung2015random}, \citet*{menzel2015b},
\citet{boucher2017my}, \citet*{mele2017dense}, \citet*{mele2017structural}
and \citet*{ridder2017estimation}. However, this type of models usually
do not feature unobserved heterogeneity as in this paper. See, for
example, \citet*{de2020strategic} for a more detailed survey on this
line of literature.

This paper is also closely related to to \citet*{gao2018robust},
which similarly leverages multivariate monotonicity in a multi-index
structure under a panel multinomial choice setting. It should be pointed
out that, even though there is some structural similarity between
network data and panel data, there are no direct ways in the network
setting to make ``intertemporal comparison'' as in the panel setting,
which holds the fixed effects unchanged across two observable periods
of time. It is precisely this additional complication induced by the
network setting that requires the technique of logical differencing
proposed in this paper.

~

The rest of the paper is organized as follows. In Section \ref{sec:NetForm},
we describe the general specifications of our dyadic network formation
model. Section \ref{sec:ID_Est} establishes identification of the
parameter of interests in our model and provides a consistent tetrad
estimator. We plot the identified sets under various restrictions
on $X$ and report baseline simulation results in Section \ref{sec:C5_simu}.
We present an empirical illustration using the risk-sharing data of
Nyakatoke in Section \ref{sec:Emp}. Section \ref{sec:C5_conc} concludes.
Proofs and additional simulation results are available in the Appendix.

\section{\label{sec:NetForm}A Nonseparable Dyadic Network Formation Model}

We consider the following dyadic network formation model:
\begin{equation}
\E\left[\rest{D_{ij}}X_{i},X_{j},A_{i},A_{j}\right]=\phi\left(w\left(X_{i},X_{j}\right)^{'}\b_{0},A_{i},A_{j}\right)\label{eq:Model_NF}
\end{equation}
where:
\begin{itemize}
\item $i\in\left\{ 1,...,n\right\} $ denote a generic individual in a group
of $n$ individuals.
\item $X_{i}$ is a $\R^{d_{x}}$-valued vector of observable characteristics
for individual $i$. This could include, for example, wealth, age,
education and ethnicity of individual $i$.
\item $D_{ij}$ denotes a binary observable variable that indicates the
presence or absence of an undirected and unweighted link between two
distinct individuals $i$ and $j$: $D_{ij}=D_{ji}$ for all pairs
of individuals $ij$, with $D_{ij}=1$ indicating that $ij$ are linked
while $D_{ij}=0$ indicating that $ij$ are not linked.
\item $w:\R^{d_{x}}\times\R^{d_{x}}\to\R^{d}$ is a known function that
is \emph{symmetric}\footnote{Our method can also be adapted to the case with \emph{asymmetric}
$w$. See Remark \ref{rem:AymW}.} with respect to its two vector arguments. We will write $W_{ij}:=w\left(X_{i},X_{j}\right)$
for notational simplicity.
\item $\b_{0}\in\R^{d}$ is an unknown finite-dimensional parameter of interest.
Assume $\b_{0}\neq{\bf 0}$ so that we may normalize $\norm{\b_{0}}=1$,
i.e., $\b_{0}\in\S^{d-1}$.
\item $A_{i}$ is an unobserved scalar-valued variable that represents unobserved
individual heterogeneity.
\item $\phi:\R^{3}\to\R$ is an unknown measurable function that is symmetric
with respect to its second and third arguments.
\end{itemize}
In addition, we impose the following two assumptions:
\begin{assumption}[Monotonicity]
\label{assu:C5_mon} $\phi$ is weakly increasing in each of its
arguments.
\end{assumption}
Assumption \ref{assu:C5_mon} is the key assumption on which our identification
analysis is based. It requires that the conditional linking probability
between individuals with characteristics $\left(X_{i},A_{i}\right)$
and $\left(X_{j},A_{j}\right)$ be monotone in a parametric index
$\d_{ij}:=W_{ij}^{'}\b_{0}$ as well as the unobserved individual
heterogeneity terms $A_{i}$ and $A_{j}$. It should be noted that,
given monotonicity, increasingness is without loss of generality as
$\phi$, $\b_{0}$ and $A_{i},A_{j}$ are all unknown or unobservable.
In addition, Assumption \ref{assu:C5_mon} only requires that $\phi$
is monotonic in the index $W_{ij}^{'}\b_{0}$ as a whole, not individual
coordinates of $W_{ij}$. Therefore, we may include nonlinear or non-monotone
functions $w\left(\cd,\cd\right)$ on the observable characteristics
as long as Assumption \ref{assu:C5_mon} is maintained.

Next, we impose a standard random sampling assumption:
\begin{assumption}[Random Sampling]
\label{assu:DNF-rs} $\left(X_{i},A_{i}\right)$ is i.i.d. across
$i\in\left\{ 1,...,n\right\} $.
\end{assumption}
In particular, Assumption \ref{assu:DNF-rs} allows arbitrary dependence
structures between the observable characteristics $X_{i}$ and the
unobservable characteristic $A_{i}$.

~

Model \eqref{eq:Model_NF} along with the specifications and the two
assumptions introduced above encompass a large class of dyadic network
formation models in the literature. For example, the standard dyadic
network formation model \eqref{eq:Graham} studied by \citet*{graham2017econometric}
can be written as
\[
\E\left[\rest{D_{ij}}X_{i},X_{j},A_{i},A_{j}\right]=F\left(W_{ij}^{'}\b_{0}+A_{i}+A_{j}\right)
\]
where $F$ is the CDF of the standard logistic distribution. For the
semiparametric version considered by \citet{Candelaria2016}, \citet{Toth2017},
and \citet{gao2018nonparametric}, we can simply take $F$ to be some
unknown CDF. In either case, the monotonicity of the CDF $F$ and
the additive structure of $W_{ij}^{'}\b_{0}+A_{i}+A_{j}$ immediately
imply Assumption \ref{assu:C5_mon}.

However, our current model specification and assumptions further incorporate
a larger class of dyadic network formation models with potentially
nontransferable utilities. Specifically, consider the joint requirement
of two threshold-crossing conditions
\begin{align}
D_{ij} & =\ind\left\{ u\left(W_{ij}^{'}\b_{0},A_{i},A_{j},\e_{ij}\right)\geq0\right\} \cd\ind\left\{ u\left(W_{ji}^{'}\b_{0},A_{j},A_{i},\e_{ji}\right)\geq0\right\} ,\label{eq:Model_NFNTU}
\end{align}
where $u$ is an unknown function that is not necessarily symmetric
with respect to its second and third arguments $\left(A_{i},A_{j}\right)$,
and $\left(\e_{ij},\e_{ji}\right)$ are idiosyncratic pairwise shocks
that are i.i.d. across each unordered $ij$ pair with some unknown
distribution. In particular, notice that model \eqref{eq:C4_ExNTU}
is a special case of \eqref{eq:Model_NFNTU}. Suppose we further impose
the following two lower-level assumptions 1a and 1b:
\begin{assumption*}[\textbf{1a}]
 $\left(\e_{ij},\e_{ji}\right)$ are independent of $\left(X_{i},A_{i},X_{j},A_{j}\right)$.
\end{assumption*}
\begin{assumption*}[\textbf{1b}]
 $u$ is weakly increasing in its first three arguments.
\end{assumption*}
Then, the conditional linking probability
\begin{align}
 & \ \E\left[\rest{D_{ij}}X_{i},X_{j},A_{i},A_{j}\right]\nonumber \\
= & \ \int\ind\left\{ u\left(W_{ij}^{'}\b_{0},A_{i},A_{j},\e_{ij}\right)\geq0\right\} \cd\ind\left\{ u\left(W_{ji}^{'}\b_{0},A_{j},A_{i},\e_{ji}\right)\geq0\right\} d\P\left(\e_{ij},\e_{ji}\right)\nonumber \\
=: & \ \phi\left(W_{ij}^{'}\b_{0},A_{i},A_{j}\right)\label{eq:example ntu link}
\end{align}
can be represented by model \eqref{eq:Model_NF} with Assumption \ref{assu:C5_mon}
satisfied.

In particular, we do not require $\e_{ij}\indep\e_{ji}$. In fact,
$\e_{ij}\equiv\e_{ji}$ is readily incorporated in our model. Under
the maintained assumption that $w\left(X_{i},X_{j}\right)=w\left(X_{j},X_{i}\right)$,
if $\e_{ij}\equiv\e_{ji}$ and $u$ is furthermore assumed to be symmetric
with respect to its second and third arguments ($A_{i}$ and $A_{j}$),
then our model specializes to the case of transferable utilities,
\[
D_{ij}=\ind\left\{ u\left(W_{ij}^{'}\b_{0},A_{i},A_{j},\e_{ij}\right)\geq0\right\} ,
\]
where effectively only one threshold crossing condition determines
the establishment of a given network link. Therefore, our NTU model
\eqref{eq:Model_NF} includes the TU model as a special case.
\begin{rem}[Symmetry of $w$]
\label{rem:AymW} To explain the key idea of our identification strategy
in a notation-economical way, we will be focusing on the case of symmetric
$w$ in most of the following sections. However, it should be pointed
out that our method can also be applied to the case where $w$ is
allowed to be asymmetric in \eqref{eq:Model_NFNTU}, so that individual
utilities based on observable characteristics can also be made asymmetric
(nontransferable). In that case, model \eqref{eq:Model_NFNTU} needs
to be modified as
\begin{equation}
\E\left[\rest{D_{ij}}X_{i},X_{j},A_{i},A_{j}\right]=\phi\left(W_{ij}^{'}\b_{0},W_{ji}^{'}\b_{0},A_{i},A_{j}\right),\label{eq:Model_AsymW}
\end{equation}
where $W_{ij}=w\left(X_{i},X_{j}\right)$ may be different from $W_{ji}=w\left(X_{j},X_{i}\right)$,
but $\phi$ is symmetric with respect to its first two arguments $W_{ij}^{'}\b_{0},W_{ji}^{'}\b_{0}$\emph{
whenever} $A_{i}=A_{j}$ Moreover, Assumption \ref{assu:C5_mon} should
also be changed to be $\phi$ is monotone in all its \emph{four} arguments.
See Appendix \ref{subsec:ID_asym} for a more detailed discussion
on how our identification strategy can be adapted to accommodate asymmetric
$w$ under appropriate conditions.
\end{rem}

\section{\label{sec:ID_Est}Identification and Estimation}

\subsection{\label{subsec:DNF_ID}Identification via Logical Differencing}

In this section, we explain the key idea of our identification strategy.
We construct a mutually exclusive event to cancel out the unobservable
heterogeneity $A_{i}$ and $A_{j}$, which leads to an identifying
restriction on $\b_{0}$. We call this technique ``\textit{logical
differencing}''.

For each fixed individual $\ol i$, and each possible $\ol x\in\R^{d_{x}}$,
define
\begin{equation}
\rho_{\ol i}\left(\ol x\right):=\E\left[\rest{D_{\ol ik}}X_{k}=\ol x\right]\label{eq:define rho_i(x)}
\end{equation}
as the linking probability of this specific individual $\ol i$ with
a group of individuals, individually indexed by $k$, with the same
observable characteristics $X_{k}=\ol x$ (but potentially different
fixed effects $A_{k}$). Clearly, $\rho_{i}\left(\ol x\right)$ is
directly identified from data in a single large network.

Suppose that individual $\ol i$ has observed characteristics $X_{\ol i}=x_{\ol i}$
and unobserved characteristics $A_{\ol i}=a_{\ol i}$. Then, by model
\eqref{eq:Model_NF} we have
\begin{align}
\rho_{\ol i}\left(\ol x\right) & =\E\left[\rest{\E\left[\rest{D_{\ol ik}}X_{k}=\ol x,A_{k},X_{\ol i}=x_{\ol i},A_{\ol i}=a_{\ol i}\right]}X_{k}=\ol x\right]\nonumber \\
 & =\E\left[\rest{\phi\left(w\left(x_{\ol i},\ol x\right)^{'}\b_{0},a_{\ol i},A_{k}\right)}X_{k}=\ol x\right]\nonumber \\
 & =:\psi_{\ol x}\left(w\left(x_{\ol i},\ol x\right)^{'}\b_{0},a_{\ol i}\right),\label{eq:C5_rhorho}
\end{align}
where the expectation in the second to last line is taken over $A_{k}$
conditioning on $X_{k}=\ol x$. As we allow $A_{k}$ and $X_{k}$
to be arbitrarily correlated, the $\psi_{\ol x}$ function defined
in the last line of \eqref{eq:C5_rhorho} is dependent on $\ol x$.
In the same time, notice that $\psi_{\ol x}$ does not depend on the
identity of $\ol i$ beyond the values of $w\left(x_{\ol i},\ol x\right)^{'}\b_{0}$
and $a_{\ol i}$. By Assumption \ref{assu:C5_mon}, $\psi_{\ol x}\left(w\left(x_{\ol i},\ol x\right)^{'}\b_{0},a_{\ol i}\right)$
must be bivariate weakly increasing in the index $w\left(x_{\ol i},\ol x\right)^{'}\b_{0}$
and the unobserved heterogeneity scalar $a_{\ol i}$. We now show
how to use the bivariate monotonicity to obtain identifying restrictions
on $\b_{0}$.

~

Fixing two distinct individuals $\ol i$ and $\ol j$ in the population,
we first consider the event that \textit{individual} $\ol i$ is \emph{strictly
more popular than} \textit{individual} $\ol j$ among the \textit{group}
of individuals with observed characteristics $X_{k}=\ol x$:
\begin{equation}
\rho_{\ol i}\left(\ol x\right)>\rho_{\ol j}\left(\ol x\right),\label{eq:rho(ik) >jk}
\end{equation}
which is an event directly identifiable from observable data given
\eqref{eq:define rho_i(x)}. Even though event \eqref{eq:rho(ik) >jk}
is the same conditioning event as considered in \citet*{Toth2017}
and analogous to the tetrad comparisons made in \citet{Candelaria2016},
we now exploit the following logical deduction based on the bivariate
monotonicity of the conditional popularity of $\ol i$ in $w\left(X_{\ol i},\ol x\right)^{'}\b_{0}$
and $A_{\ol i}$ without the assumption of additivity between them.
Specifically, writing $\left(x_{\ol i},a_{\ol i}\right)$ and $\left(x_{\ol j},a_{\ol j}\right)$
as the observable and unobservable characteristics of individuals
$\ol i$ and $\ol j$, by \eqref{eq:C5_rhorho} we have
\begin{align}
 & \rho_{\ol i}\left(\ol x\right)>\rho_{\ol j}\left(\ol x\right).\nonumber \\
\iff\quad & \psi_{\ol x}\left(w\left(x_{\ol i},\ol x\right)^{'}\b_{0},a_{\ol i}\right)>\psi_{\ol x}\left(w\left(x_{\ol j},\ol x\right)^{'}\b_{0},a_{\ol j}\right)\nonumber \\
\imp\quad & \left\{ w\left(x_{\ol i},\ol x\right)^{'}\b_{0}>w\left(x_{\ol j},\ol x\right)^{'}\b_{0}\right\} \text{ OR }\left\{ a_{\ol i}>a_{\ol j}\right\} ,\label{eq:C5_contra}
\end{align}
Note that the last line of equation \eqref{eq:C5_contra} is a natural
necessary (but not sufficient) condition for $\rho_{\ol i}\left(\ol x\right)>\rho_{\ol j}\left(\ol x\right)$
under bivariate monotonicity.

~

Now, consider the event that \textit{individual} $\ol i$ is \emph{strictly
less popular than} \textit{individual} $\ol j$ among the \textit{group}
of individuals with observed characteristics $X_{k}=\underline{x}$,
i.e., 
\begin{equation}
\rho_{\ol i}\left(\ul x\right)<\rho_{\ol j}\left(\ul x\right).\label{eq:rho_i(l) < rho_j(l)}
\end{equation}
Then, by a similar argument to \eqref{eq:C5_contra}, we deduce
\begin{equation}
\rho_{\ol i}\left(\ul x\right)<\rho_{\ol j}\left(\ul x\right)\quad\imp\quad\left\{ w\left(x_{\ol i},\ul x\right)^{'}\b_{0}<w\left(x_{\ol j},\ul x\right)^{'}\b_{0}\right\} \text{ OR }\left\{ a_{\ol i}<a_{\ol j}\right\} .\label{eq:C5_contra2}
\end{equation}
Notice that the event $\left\{ a_{\ol i}<a_{\ol j}\right\} $ in \eqref{eq:C5_contra2}
is mutually exclusive with the event $\left\{ a_{\ol i}>a_{\ol j}\right\} $
that shows up in \eqref{eq:C5_contra}.

~

Next, consider the event that the two events \eqref{eq:rho(ik) >jk}
and \eqref{eq:rho_i(l) < rho_j(l)} described above\emph{ simultaneously
happen}. Then, by \eqref{eq:C5_contra}, \eqref{eq:C5_contra2} and
basic logical operations, we have
\begin{align}
 & \text{\ensuremath{\phantom{\text{ OR }}} }\left\{ \rho_{\ol i}\left(\ol x\right)>\rho_{\ol j}\left(\ol x\right)\right\} \text{ AND }\left\{ \rho_{\ol i}\left(\ul x\right)<\rho_{\ol j}\left(\ul x\right)\right\} \nonumber \\
\imp\quad & \text{\ensuremath{\phantom{\text{ OR }}} }\left(\left\{ w\left(x_{\ol i},\ol x\right)^{'}\b_{0}>w\left(x_{\ol j},\ol x\right)^{'}\b_{0}\right\} \text{ OR }\left\{ a_{\ol i}>a_{\ol j}\right\} \right)\nonumber \\
 & \ \text{AND\  \ensuremath{\left(\left\{  w\left(x_{\ol i},\ul x\right)^{'}\b_{0}<w\left(x_{\ol j},\ul x\right)^{'}\b_{0}\right\}  \text{ OR }\left\{  a_{\ol i}<a_{\ol j}\right\}  \right)}}\nonumber \\
\iff\quad & \phantom{\text{ OR }}\left(\left\{ w\left(x_{\ol i},\ol x\right)^{'}\b_{0}>w\left(x_{\ol j},\ol x\right)^{'}\b_{0}\right\} \ \text{AND}\ \left\{ w\left(x_{\ol i},\ul x\right)^{'}\b_{0}<w\left(x_{\ol j},\ul x\right)^{'}\b_{0}\right\} \right)\nonumber \\
 & \text{ OR }\left(\left\{ w\left(x_{\ol i},\ol x\right)^{'}\b_{0}>w\left(x_{\ol j},\ol x\right)^{'}\b_{0}\right\} \ \text{AND}\ \left\{ a_{\ol i}<a_{\ol j}\right\} \right)\nonumber \\
 & \text{ OR }\left(\left\{ a_{\ol i}>a_{\ol j}\right\} \ \text{AND}\ \left\{ w\left(x_{\ol i},\ul x\right)^{'}\b_{0}<w\left(x_{\ol j},\ul x\right)^{'}\b_{0}\right\} \right)\nonumber \\
 & \text{ OR }\left(\left\{ a_{\ol i}>a_{\ol j}\right\} \ \text{AND}\ \left\{ a_{\ol i}<a_{\ol j}\right\} \right)\nonumber \\
\imp\quad & \phantom{\text{ OR }}\left(\left\{ w\left(x_{\ol i},\ol x\right)^{'}\b_{0}>w\left(x_{\ol j},\ol x\right)^{'}\b_{0}\right\} \ \text{AND}\ \left\{ w\left(x_{\ol i},\ul x\right)^{'}\b_{0}<w\left(x_{\ol j},\ul x\right)^{'}\b_{0}\right\} \right)\nonumber \\
 & \text{ OR }\left\{ w\left(x_{\ol i},\ol x\right)^{'}\b_{0}>w\left(x_{\ol j},\ol x\right)^{'}\b_{0}\right\} \nonumber \\
 & \text{ OR }\left\{ w\left(x_{\ol i},\ul x\right)^{'}\b_{0}<w\left(x_{\ol j},\ul x\right)^{'}\b_{0}\right\} \nonumber \\
\iff\quad & \left\{ \left(w\left(x_{\ol i},\ol x\right)-w\left(x_{\ol j},\ol x\right)\right)^{'}\b_{0}>0\right\} \ \text{OR}\ \left\{ \left(w\left(x_{\ol i},\ul x\right)-w\left(x_{\ol j},\ul x\right)\right)^{'}\b_{0}<0\right\} ,\label{eq:C5_logicaldiff}
\end{align}

The derivations above exploit two simple logical properties: first,
\[
\left\{ a_{\ol i}>a_{\ol j}\right\} \ \text{AND}\ \left\{ a_{\ol i}<a_{\ol j}\right\} \quad=\quad\text{FALSE},
\]
and second,
\[
\left\{ w\left(x_{\ol i},\ol x\right)^{'}\b_{0}>w\left(x_{\ol j},\ol x\right)^{'}\b_{0}\right\} \ \text{AND}\ \left\{ a_{\ol i}<a_{\ol j}\right\} \quad\imp\quad\left\{ w\left(x_{\ol i},\ol x\right)^{'}\b_{0}>w\left(x_{\ol j},\ol x\right)^{'}\b_{0}\right\} ,
\]
which uses only necessary but not sufficient condition, so that we
can obtain an identifying restriction \eqref{eq:C5_logicaldiff} on
$\b_{0}$ that does not involve $a_{\ol i}$ nor $a_{\ol j}$. These
two forms of logical operations together enable us to ``difference
out'' (or ``cancel out'') the unobserved heterogeneity terms $a_{\ol i}$
and $a_{\ol j}$.

In contrast with various forms of ``\emph{arithmetic differencing}''
techniques proposed in the econometric literature (including \citealp{Candelaria2016}
and \citealp{Toth2017} specific to the dyadic network formation literature),
our proposed technique does \emph{not} rely on additive separability
between the parametric index $w\left(x_{\ol i},\ol x\right)^{'}\b_{0}$
and the unobserved heterogeneity term $a_{\ol i}$. Instead, our identification
strategy is based on multivariate monotonicity and utilizes logical
operations rather than standard arithmetic differencing to cancel
out the unobserved heterogeneity terms. Hence, we term our method
``\emph{logical differencing}''.

~

The identifying arguments above are derived for a fixed pair of individuals
$\ol i$ and $\ol j$, but clearly the arguments can be applied for
any pair of individuals $\left(i,j\right)$ with observable characteristics
$x_{i}$ and $x_{j}$. Writing
\begin{align*}
\tau_{ij}\left(\ol x,\ul x\right) & :=\ind\left\{ \rho_{i}\left(\ol x\right)>\rho_{j}\left(\ol x\right)\right\} \cd\ind\left\{ \rho_{i}\left(\ul x\right)<\rho_{j}\left(\ul x\right)\right\} \text{ }\text{and}\\
\l\left(\ol x,\ul x;x_{i},x_{j};\b\right) & :=\ind\left\{ \left(w\left(x_{i},\ol x\right)-w\left(x_{j},\ol x\right)\right)^{'}\b_{0}\leq0\right\} \cd\ind\left\{ \left(w\left(x_{i},\ul x\right)-w\left(x_{j},\ul x\right)\right)^{'}\b_{0}\geq0\right\} 
\end{align*}
for each $\b\in\S^{d-1}$, we summarize the identifying arguments
above by the following lemma.
\begin{lem}[Identifying Restriction]
\label{lem:DNF1}Under model \eqref{eq:Model_NF} and Assumptions
\ref{assu:C5_mon} and \ref{assu:DNF-rs}, we have

\begin{equation}
\tau_{ij}\left(\ol x,\ul x\right)=1\quad\imp\quad\l\left(\ol x,\ul x;x_{i},x_{j};\b_{0}\right)=0.\label{eq:ID_Rest}
\end{equation}
\end{lem}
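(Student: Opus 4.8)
The plan is to formalize the chain of logical deductions already sketched in \eqref{eq:C5_contra}--\eqref{eq:C5_logicaldiff}, stated now for a generic pair $(i,j)$ rather than the fixed pair $(\ol i,\ol j)$. Assume $\tau_{ij}(\ol x,\ul x)=1$; by definition this means that both events $\{\rho_i(\ol x)>\rho_j(\ol x)\}$ and $\{\rho_i(\ul x)<\rho_j(\ul x)\}$ occur. Writing $(x_i,a_i)$ and $(x_j,a_j)$ for the observable and unobservable characteristics of $i$ and $j$, the goal is to deduce that $\lambda(\ol x,\ul x;x_i,x_j;\b_0)=0$, i.e.\ that $(w(x_i,\ol x)-w(x_j,\ol x))'\b_0>0$ or $(w(x_i,\ul x)-w(x_j,\ul x))'\b_0<0$.

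First I would extract two necessary conditions from bivariate monotonicity. By the representation \eqref{eq:C5_rhorho}, $\rho_i(\ol x)=\psi_{\ol x}(w(x_i,\ol x)'\b_0,a_i)$ and likewise for $j$, where Assumption \ref{assu:C5_mon} guarantees that $\psi_{\ol x}$ is weakly increasing in both of its arguments. The argument is the contrapositive of monotonicity: if both $w(x_i,\ol x)'\b_0\le w(x_j,\ol x)'\b_0$ and $a_i\le a_j$ held, then weak monotonicity of $\psi_{\ol x}$ would force $\rho_i(\ol x)\le\rho_j(\ol x)$, contradicting $\rho_i(\ol x)>\rho_j(\ol x)$. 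Hence $\{w(x_i,\ol x)'\b_0>w(x_j,\ol x)'\b_0\}$ or $\{a_i>a_j\}$, which is \eqref{eq:C5_contra}. Repeating the identical step at $\ul x$ with the reversed inequality yields $\{w(x_i,\ul x)'\b_0<w(x_j,\ul x)'\b_0\}$ or $\{a_i<a_j\}$, namely \eqref{eq:C5_contra2}.

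Next I would combine these two disjunctions. Taking their conjunction and distributing produces four cases. The two decisive observations are purely logical: the case $\{a_i>a_j\}$ and $\{a_i<a_j\}$ is impossible, so it drops out entirely; and in each of the two mixed cases one discards the $a$-conjunct using that a conjunction implies either of its conjuncts (a necessary-but-not-sufficient weakening). What survives involves only the index terms, giving exactly the disjunction in \eqref{eq:C5_logicaldiff}, i.e.\ $\{(w(x_i,\ol x)-w(x_j,\ol x))'\b_0>0\}$ or $\{(w(x_i,\ul x)-w(x_j,\ul x))'\b_0<0\}$. Finally I would note that this disjunction is precisely the negation of the conjunction of $\{(w(x_i,\ol x)-w(x_j,\ol x))'\b_0\le0\}$ and $\{(w(x_i,\ul x)-w(x_j,\ul x))'\b_0\ge0\}$ that defines $\lambda(\ol x,\ul x;x_i,x_j;\b_0)=1$, so that $\lambda(\ol x,\ul x;x_i,x_j;\b_0)=0$, completing the proof.

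The argument is logical rather than analytic, so I do not anticipate a hard computational step; the substantive content is carried entirely by the setup. The one point deserving care — and the conceptual heart of \emph{logical differencing} — is that one must invoke only the necessary (contrapositive) direction of monotonicity, so that the implications \eqref{eq:C5_contra} and \eqref{eq:C5_contra2} are genuinely one-directional and the resulting restriction on $\b_0$ is an implication rather than an equivalence. I would also verify that symmetry of $w$ plays no role here (it is needed only later, for estimation), and that Assumption \ref{assu:DNF-rs} enters solely through the well-definedness and identifiability of $\rho_i(\cdot)$ in \eqref{eq:define rho_i(x)}, not in the deduction itself.
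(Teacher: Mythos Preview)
Your proposal is correct and follows essentially the same approach as the paper: the proof of Lemma \ref{lem:DNF1} is given in the main text through the derivations \eqref{eq:C5_contra}--\eqref{eq:C5_logicaldiff} rather than in the appendix, and your write-up faithfully formalizes that argument, including the contrapositive use of bivariate monotonicity and the elimination of the mutually exclusive $\{a_i>a_j\}\cap\{a_i<a_j\}$ case. Your closing remarks on the role of symmetry of $w$ and of Assumption \ref{assu:DNF-rs} are also accurate.
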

A simple (but clearly not unique) way to build a criterion function
based on Lemma \ref{lem:DNF1} is to define

\begin{equation}
Q\left(\b\right):=\E_{ij,kl}\left[\tau_{ij}\left(X_{k},X_{l}\right)\l\left(X_{k},X_{l};X_{i},X_{j};\b\right)\right],\label{eq:pop_cri}
\end{equation}
where the expectation is $\E_{ij,kl}$ taken over random samples of
ordered tetrads $\left(i,j,k,l\right)$ from the population, and $\left(X_{i},X_{j},X_{k},X_{l}\right)$
denote the random variables corresponding to the observable characteristics
of $\left(i,j,k,l\right)$. According to Lemma \ref{lem:DNF1}, $Q\left(\b_{0}\right)=0$,
which is always smaller than or equal to $Q\left(\b\right)\geq0=Q\left(\b_{0}\right)$
for any $\b\neq\b_{0}$ because $\tau_{ij}\geq0$ and $\lambda\geq0$
by construction.

Observing that the scale of $\b_{0}$ is never identified, we write
\[
B_{0}:=\left\{ \b\in\S^{d-1}:Q\left(\b\right)=0\right\} 
\]
to represent the normalized ``identified set'' relative to the criterion
$Q$ defined in \eqref{eq:pop_cri}. Lemma \ref{lem:DNF1} implies
that $\b_{0}\in B_{0}$, but in general there is no guarantee that
$B_{0}$ is a singleton. The next subsection contains a set of sufficient
conditions that guarantees $B_{0}=\left\{ \b_{0}\right\} $.
\begin{rem}
We should point out that the identified set $B_{0}$ defined above
based on logical differencing is not \emph{sharp} in general, since
the individual unobserved heterogeneity term $A_{i}$ is \emph{canceled
out by logical differencing. }In fact, one can show $A_{i}$ is also
identified (up to proper normalization) under certain conditions\footnote{The identification of $A_{i}$ is conceptually analogous to the identification
of individual fixed effects in a long panel setting.}, and knowledge about $A_{i}$ can help with the identification of
$\b_{0}$. In fact, when all realizations of $A_{i}$ are point identified,
the point identification of $\b_{0}$ can be established under much
weaker conditions than those to be presented in Section \ref{subsec:PointID}.\footnote{See \citet*{gao2018nonparametric} for a related discussion. In fact,
the identification strategy in \citet*{gao2018nonparametric} can
be adapted to establish identification of $A_{i}$ under the NTU setting.
However, a rigorous presentation of such identification results is
beyond the scope of this paper.} However, we trade sharpness for simplicity: this paper provides a
method of identification (and estimation) without the need to deal
with the incidental parameters $A_{i}$.
\end{rem}
~

It is worth mentioning that for any one-sided sign preserving function
$\g$ such that 
\begin{equation}
\g\left(t\right)\ \begin{cases}
\geq0, & \text{for }t>0,\\
=0, & \text{for }t\leq0.
\end{cases}\label{eq:gamma_sign}
\end{equation}
we may define
\begin{align}
\tau_{ij}^{\g}\left(\ol x,\ul x\right) & :=\g\left(\rho_{i}\left(\ol x\right)-\rho_{j}\left(\ol x\right)\right)\cd\g\left(\rho_{j}\left(\ul x\right)-\rho_{i}\left(\ul x\right)\right),\label{eq:smooth tau}\\
Q^{\g}\left(\b\right) & :=\E_{ij,kl}\left[\tau_{ij}^{\g}\left(X_{k},X_{l}\right)\l\left(X_{k},X_{l};X_{i},X_{j};\b\right)\right],\label{eq:smooth_Q}
\end{align}
without changing the identification set at all, since 
\[
\tau_{ij}\left(\ol x,\ul x\right)>0\quad\text{if and only if}\quad\tau_{ij}^{\g}\left(\ol x,\ul x\right)>0.
\]
In fact, $\tau^{\g}$ specializes to $\tau$ when we set $\g\left(t\right):=\ind\left\{ t>0\right\} $.
Alternatively, we may set $\g$ to be ``smoother'', say, $\g\left(t\right):=\left[t\right]_{+}$
the positive part function.

Such forms of ``smoothing'' in the population criterion will be
irrelevant to all the identification results and its proofs in this
paper, so for notational simplicity, we will suppress $\g$ and focus
on the representative $\tau_{ij}$ and $Q$ in the next subsection
about point identification. However, a smooth $\g$ will play a role
when it comes to estimation and computation, and we will revisit $\g$
in Section \ref{subsec:DNF_est}.

\subsection{\label{subsec:PointID}Sufficient Conditions for Point Identification}

We now present a set of sufficient conditions that guarantee point
identification of $\b_{0}$ on $\S^{d-1}$.

\medskip{}

\noindent \textbf{Assumption 1$'$} \customlabel{as1prime}{$1'$}
(Strict Monotonicity of $\phi$)\textbf{.} $\phi$ defined in model
\eqref{eq:Model_NF} is weakly increasing in $A_{i}$ and $A_{j}$
while strictly increasing in the index $w\left(X_{i},X_{j}\right)^{'}\b_{0}$.

\medskip{}
Assumption \ref{as1prime} strengthens Assumption \ref{assu:C5_mon}
by requiring that $\phi$ be strictly increasing in the parametric
index $w\left(X_{i},X_{j}\right)^{'}\b_{0}$. This is used to guarantee
that differences in the parametric index can indeed lead to changes
in conditional linking probabilities, so that the conditional event
$\left\{ \tau_{ij}\left(\ol x,\ul x\right)=1\right\} $ in Lemma \ref{lem:DNF1}
may occur with strictly positive probability.
\begin{assumption}[Continuity of $\phi$ and $w$]
\label{assu:DNF-con} $\phi$ and $w$ are continuous functions on
their domains.
\end{assumption}
\begin{assumption}[Sufficient Directional Variations]
\label{assu:DNF-supportX} There exist distinct points $\ol x$ and
$\ul x$ in $Supp\left(X_{i}\right)$, such that the vector ${\bf 0}$
lies in the interior of $Supp\left(w\left(\ol x,X_{i}\right)-w\left(\ul x,X_{i}\right)\right)$.
\end{assumption}
We also provide a lower-level condition for Assumption \ref{assu:DNF-supportX}
when $w$ is the coordinate-wise Euclidean distance function.\medskip{}

\noindent \textbf{Assumption 4$'$}\customlabel{as3prime}{$4'$}\emph{Suppose
that (i) $w_{h}\left(\ol x,\ul x\right):=\left|\ol x_{h}-\ul x_{h}\right|$
for every coordinate $h$, and (ii) $Supp\left(X_{i}\right)$ has
nonempty interior.}\medskip{}

Essentially, since our criterion function is based on indicator functions
of halfspaces in the form of $\left(w\left(\ol x,\tilde{x}\right)-w\left(\ul x,\tilde{x}\right)\right)^{'}\b\lesseqqgtr0$,
we will need the distribution of these indicators to take both values
$0$ and $1$ with strictly positive probabilities under any $\b$,
so that every possible $\b$ different from $\b_{0}$ can be differentiated
from $\b_{0}$ by the criterion function (up to scale normalization).
Hence, we need sufficient variations in the observable covariates.

Assumption \ref{assu:DNF-supportX}, though apparently not very transparent
on its own, is actually implied by Assumption \ref{as3prime}.\footnote{\emph{Proof:} Suppose that $Supp\left(X_{i}\right)$ has nonempty
interior. Then there exist two distinct points $\ol x$ and $\ul x$
in the interior of $Supp\left(X_{i}\right)$ such that $\tilde{x}:=\frac{1}{2}\left(\ol x+\ul x\right)$
is also in the interior of $Supp\left(X_{i}\right)$. Clearly, $w\left(\ol x,\tilde{x}\right)=w\left(\ul x,\tilde{x}\right)=\left(\frac{1}{2}\left|\ol x_{1}-\ul x_{1}\right|,...,\frac{1}{2}\left|\ol x_{k}-\ul x_{k}\right|\right)^{'}$.
Since $\ol x,\ul x$ are all interior points of $Supp\left(X_{i}\right)$
and $w$ is continuous, the vector ${\bf 0}$ must be an interior
point of $Supp\left(w\left(\ol x,X_{i}\right)-w\left(\ul x,X_{i}\right)\right)$.
\qedsymbol} We note that the assumption of nonempty interior is a familiar one,
which is often imposed for point identification in the literature,
say, on maximum score estimation. Assumption \ref{as3prime} allows
the support of all observable covariates to be bounded, but on the
other hand require all covariates to be continuously distributed.

In Appendix \ref{subsec:PID_Discrete}, we present an alternative
set of assumptions that allow for the presence of discrete covariates,
but require the existence of a ``special covariate'' with large
(conditional) continuous support and nonzero coefficient, a la \citet*{horowitz1992smoothed}.
Since the identification result with special covariate needs to be
presented under a different scale normalization, we defer the results
to the appendix.
\begin{assumption}[Conditional Support of $A_{i}$]
\label{assu:DNF-supportA}$A_{i}$ is conditionally distributed on
the same support given $X_{i}=x$ for any realization $x\in Supp\left(X_{i}\right)$.
\end{assumption}
Assumption \ref{assu:DNF-supportA} together with Assumption \ref{assu:DNF-rs}
implies that for two randomly sampled individuals $i,j$, there is
a strictly positive probability of $A_{i}$ and $A_{j}$ being sufficiently
close to each other, conditional on any realizations of $X_{i}$ and
$X_{j}$. Together with the continuity condition in Assumption \ref{assu:DNF-con},
we can ensure that the parametric index based on observable covariates
alone can determine whether $i$ or $j$ is relatively more popular
among a certain group of individuals.

$\ $

Next, we lay out the lemma that will be used in the proof of point
identification of $\b_{0}$.
\begin{lem}[Differentiating $\b$ from $\b_{0}$]
\label{lem:DNF2} Under model \eqref{eq:Model_NF}, Assumptions \ref{as1prime},
\ref{assu:DNF-rs}--\ref{assu:DNF-supportA}, for each $\b\in\text{\ensuremath{\mathbb{S}}}^{d-1}\backslash\left\{ \b_{0}\right\} $,
\begin{align}
\tau_{ij}\left(X_{k},X_{l}\right) & =1,\label{eq:DNF_ID1}\\
\l\left(X_{k},X_{l};X_{i},X_{j};\b_{0}\right) & =0,\label{eq:DNF_ID2}\\
\l\left(X_{k},X_{l};X_{i},X_{j};\b\right) & =1,\label{eq:DNF_ID3}
\end{align}
occur simultaneously with strictly positive probability, as we randomly
sample individuals $i,j,k,l$.
\end{lem}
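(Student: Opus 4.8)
The plan is to exhibit a positive-probability configuration of the tetrad $(i,j,k,l)$ on which all three displays hold simultaneously. The first observation is that \eqref{eq:DNF_ID2} comes essentially for free: by Lemma~\ref{lem:DNF1}, the reversal \eqref{eq:DNF_ID1} already forces $\l\left(X_{k},X_{l};X_{i},X_{j};\b_{0}\right)=0$, so it suffices to produce a draw on which \eqref{eq:DNF_ID1} and \eqref{eq:DNF_ID3} occur together. I would organize this as a two-layer argument. First, build an event $E$ depending only on the covariates $\left(X_{i},X_{j},X_{k},X_{l}\right)$ on which the purely covariate-based statements \eqref{eq:DNF_ID2} and \eqref{eq:DNF_ID3} hold deterministically; then show that, conditional on $E$, the random fixed effects $\left(A_{i},A_{j}\right)$ deliver the popularity reversal \eqref{eq:DNF_ID1} with positive probability, and finally integrate.

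For the covariate layer, set $v_{k}:=w\left(X_{i},X_{k}\right)-w\left(X_{j},X_{k}\right)$ and $v_{l}:=w\left(X_{i},X_{l}\right)-w\left(X_{j},X_{l}\right)$. Because $\b\neq\b_{0}$ are distinct points of $\S^{d-1}$, the cones $C_{1}:=\left\{ v:v^{'}\b_{0}>0,\ v^{'}\b<0\right\} $ and $C_{2}:=-C_{1}=\left\{ v:v^{'}\b_{0}<0,\ v^{'}\b>0\right\} $ are nonempty open cones. I would take the special points $\ol x,\ul x$ from Assumption~\ref{assu:DNF-supportX}, place the \emph{individual} covariates at (neighborhoods of) $X_{i}=\ol x,X_{j}=\ul x$, and let the \emph{group} covariates $X_{k},X_{l}$ vary. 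With $X_{i}=\ol x,X_{j}=\ul x$ fixed, $v_{k}$ becomes $w\left(\ol x,X_{k}\right)-w\left(\ul x,X_{k}\right)$, which is exactly the difference in Assumption~\ref{assu:DNF-supportX} (the generic covariate played by $X_{k}$); hence $\mathbf{0}$ lies in the interior of its support, and since $C_{1}$ is a nonempty open cone it receives positive probability. The same holds for $v_{l}$ and $C_{2}$, and the two are independent because $X_{k}$ and $X_{l}$ are independently sampled (Assumption~\ref{assu:DNF-rs}). On the event $E:=\left\{ v_{k}\in C_{1},\ v_{l}\in C_{2}\right\} $ one reads off $v_{k}^{'}\b_{0}>0$, so the first indicator in $\l\left(\cdot;\b_{0}\right)$ vanishes, giving \eqref{eq:DNF_ID2} (consistent with Lemma~\ref{lem:DNF1}); and $v_{k}^{'}\b<0$ together with $v_{l}^{'}\b>0$ make both indicators in $\l\left(\cdot;\b\right)$ equal to one, giving \eqref{eq:DNF_ID3}.

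For the fixed-effect layer, note that on $E$ the strict index inequalities $w\left(X_{i},X_{k}\right)^{'}\b_{0}>w\left(X_{j},X_{k}\right)^{'}\b_{0}$ and $w\left(X_{i},X_{l}\right)^{'}\b_{0}<w\left(X_{j},X_{l}\right)^{'}\b_{0}$ hold. Recalling from \eqref{eq:C5_rhorho} that $\rho_{i}\left(\ol x\right)=\psi_{\ol x}\left(w\left(X_{i},\ol x\right)^{'}\b_{0},A_{i}\right)$ with $\psi_{\ol x}$ strictly increasing in its index argument (Assumption~\ref{as1prime}) and continuous (Assumption~\ref{assu:DNF-con}, via dominated convergence since $\phi$ is bounded), I would evaluate at a common value $A_{i}=A_{j}=a^{*}$, where $a^{*}$ lies in the shared conditional support of $A$ guaranteed by Assumption~\ref{assu:DNF-supportA}. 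Strict monotonicity in the index then yields $\rho_{i}\left(X_{k}\right)>\rho_{j}\left(X_{k}\right)$ and $\rho_{i}\left(X_{l}\right)<\rho_{j}\left(X_{l}\right)$, i.e. $\tau_{ij}\left(X_{k},X_{l}\right)=1$. By continuity of $\psi$ these strict inequalities survive on an open neighborhood of $\left(a^{*},a^{*}\right)$, which carries positive mass under the conditional product law of $\left(A_{i},A_{j}\right)$, since $a^{*}$ lies in the support of each marginal and $A_{i}\perp A_{j}$ given the covariates (Assumption~\ref{assu:DNF-rs}). Hence $\P\left(\tau_{ij}\left(X_{k},X_{l}\right)=1\mid X_{i},X_{j},X_{k},X_{l}\right)>0$ at every covariate draw in $E$. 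Integrating this strictly positive integrand over $E$ and using $\P\left(E\right)>0$ shows $\P\left(\left\{ \tau_{ij}=1\right\} \cap E\right)>0$, so the three displays \eqref{eq:DNF_ID1}--\eqref{eq:DNF_ID3} hold simultaneously with positive probability; crucially, no bound uniform in the covariates is needed, because positivity of the integral follows from strict positivity on a set of positive measure.

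The step I expect to be the main obstacle is the passage from the support statement of Assumption~\ref{assu:DNF-supportX}, phrased for the exact base points $\ol x,\ul x$, to a genuinely positive-probability event, since in the continuous-covariate case (Assumption~\ref{as3prime}) the individuals' covariates hit $\ol x,\ul x$ only with probability zero. To handle this I would perturb $X_{i},X_{j}$ into neighborhoods of $\ol x,\ul x$ and argue, via continuity of $w$ and the continuous-mapping/portmanteau theorem, that the conditional law of $w\left(X_{i},X_{k}\right)-w\left(X_{j},X_{k}\right)$ given $\left(X_{i},X_{j}\right)$ converges weakly to that of $w\left(\ol x,X_{k}\right)-w\left(\ul x,X_{k}\right)$ as $\left(X_{i},X_{j}\right)\to\left(\ol x,\ul x\right)$; then for the open cone $C_{1}$, $\liminf\P\left(v_{k}\in C_{1}\mid X_{i},X_{j}\right)\geq\P\left(w\left(\ol x,X_{k}\right)-w\left(\ul x,X_{k}\right)\in C_{1}\right)>0$, keeping the cone probabilities positive on a full neighborhood of $\left(\ol x,\ul x\right)$, which itself has positive probability because $\ol x,\ul x\in\mathrm{Supp}\left(X_{i}\right)$. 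The remaining bookkeeping—measurability of the regular conditional probabilities and the independence factorizations—follows routinely from Assumption~\ref{assu:DNF-rs}.
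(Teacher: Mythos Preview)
Your proposal is correct and follows essentially the same two-layer strategy as the paper: use Assumption~\ref{assu:DNF-supportX} to place $(X_i,X_j)$ at the special pair and find on-support values for $(X_k,X_l)$ where the index differences fall in the open cones separating $\b_0$ from $\b$, then invoke strict monotonicity and continuity of $\psi_{\ol x}$ together with the common conditional support of $A$ to obtain the popularity reversal on a neighborhood of $(a^*,a^*)$. The only tactical difference is that the paper handles the perturbation from exact base points to a positive-probability event by picking a single target tetrad $(x_i,x_j,\ol x,\ul x)$, using continuity of $w$ directly to preserve all four strict inequalities on a product of $\e$-balls $\Upsilon$, and then integrating over $\Upsilon$---this is more elementary than your portmanteau/weak-convergence route and avoids the need to reason about conditional laws of $v_k$ given $(X_i,X_j)$.
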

For point identification of $\b_{0}$, we need the population criterion
\eqref{eq:pop_cri} to differentiate each $\b\in\text{\ensuremath{\mathbb{S}}}^{d-1}\backslash\left\{ \b_{0}\right\} $
from $\b_{0}$. Lemma \ref{lem:DNF2} ensures this by establishing
that, the conditioning event \eqref{eq:DNF_ID1} in Lemma \ref{lem:DNF1}
occurs with positive probability, and, when it occurs, we can obtain
different values of $\l$ at $\b$ from that at $\b$ with positive
probabilities. Compared with the set identification result discussed
in Section \ref{subsec:DNF_ID}, we need to strengthen Assumption
\ref{assu:C5_mon} to \emph{strict} monotonicity in the index $w\left(X_{i},X_{j}\right)^{'}\b_{0}$.
Then, Assumptions \ref{assu:DNF-con} and \ref{assu:DNF-supportA}
guarantee that, when the absolute difference between $A_{i}$ and
$A_{j}$ get sufficiently small, differences in the parametric indexes
can lead to differences in conditional linking probabilities, so that
\eqref{eq:DNF_ID1} will occur. Since $\b_{0}$ and $\b$ define different
intersections of halfspaces for the vector $w\left(\ol x,\tilde{x}\right)-w\left(\ul x,\tilde{x}\right)$
through $\l$, Assumption \ref{assu:DNF-supportX} then guarantees
that there will be on-support realizations of the observable covariates
that help ``detect'' such differences.

We are now ready to present the point identification result.
\begin{thm}[Point Identification of $\b_{0}$]
\label{thm:DNF-ID}Under model \eqref{eq:Model_NF} and Assumptions
\ref{as1prime}, \ref{assu:DNF-rs}--\ref{assu:DNF-supportA}, $\b_{0}$
is the unique minimizer of $Q\left(\b\right)$ defined in \eqref{eq:pop_cri}
over the unit sphere $\text{\ensuremath{\mathbb{S}}}^{d-1}$. Furthermore,
for any $\epsilon>0$, there exists $\d>0$ such that 
\[
\inf_{\b\in\text{\ensuremath{\mathbb{S}}}^{d-1}\backslash B_{\e}\left(\b_{0}\right)}Q\left(\b\right)\geq Q\left(\b_{0}\right)+\d,
\]
where $B_{\e}\left(\b_{0}\right):=\left\{ \b\in\S^{d-1}:\|\b-\b_{0}\|\leq\e\right\} $.
\end{thm}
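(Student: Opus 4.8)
The plan is to read the theorem as the conjunction of three facts and to obtain each from the material already in place: (i) $Q(\beta_0)=0$, (ii) $Q(\beta)>0$ for every $\beta\neq\beta_0$, which gives uniqueness of the minimizer, and (iii) a uniform version of (ii) away from $\beta_0$, which is the well-separated-minimum statement. For (i) and (ii) the two lemmas do essentially all the work. First I would note that $Q(\beta)\geq 0$ for every $\beta$, since $\tau_{ij}\geq 0$ and $\lambda\geq 0$ by construction in \eqref{eq:pop_cri}. Evaluating at $\beta_0$, Lemma \ref{lem:DNF1} says that whenever $\tau_{ij}(\bar x,\underline x)=1$ one has $\lambda(\bar x,\underline x;x_i,x_j;\beta_0)=0$; since $\tau_{ij}\in\{0,1\}$, the product $\tau_{ij}(X_k,X_l)\,\lambda(X_k,X_l;X_i,X_j;\beta_0)$ vanishes pointwise over tetrads, so integrating gives $Q(\beta_0)=0$ and $\beta_0$ is a minimizer.

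For uniqueness I would fix an arbitrary $\beta\in\mathbb{S}^{d-1}\setminus\{\beta_0\}$ and invoke Lemma \ref{lem:DNF2}: the events \eqref{eq:DNF_ID1} and \eqref{eq:DNF_ID3} occur simultaneously with strictly positive probability as we sample $(i,j,k,l)$. On that event the integrand of $Q(\beta)$ equals $\tau_{ij}(X_k,X_l)\,\lambda(X_k,X_l;X_i,X_j;\beta)=1\cdot 1=1$, whence $Q(\beta)\geq \mathbb{P}\big(\text{\eqref{eq:DNF_ID1} and \eqref{eq:DNF_ID3}}\big)>0=Q(\beta_0)$. Thus $\beta_0$ is the unique minimizer.

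For the well-separated minimum, fix $\epsilon>0$ and set $K_\epsilon:=\{\beta\in\mathbb{S}^{d-1}:\|\beta-\beta_0\|\geq\epsilon\}$, a closed subset of the compact sphere and therefore compact, which contains $\mathbb{S}^{d-1}\setminus B_\epsilon(\beta_0)$. Provided $Q$ is lower semicontinuous on $\mathbb{S}^{d-1}$, it attains its infimum over the compact set $K_\epsilon$ at some $\beta^\star\in K_\epsilon$; since $\|\beta^\star-\beta_0\|\geq\epsilon>0$ we have $\beta^\star\neq\beta_0$, so part (ii) gives $Q(\beta^\star)>0$. Setting $\delta:=Q(\beta^\star)>0$ then yields $\inf_{\beta\in\mathbb{S}^{d-1}\setminus B_\epsilon(\beta_0)}Q(\beta)\geq\inf_{\beta\in K_\epsilon}Q(\beta)=Q(\beta^\star)=Q(\beta_0)+\delta$, which is the claim. (For $\epsilon$ so large that $K_\epsilon=\varnothing$ the statement is vacuous.)

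The remaining and main obstacle is the lower semicontinuity of $Q$. All of the $\beta$-dependence enters through $\lambda(\,\cdot\,;\beta)$, a product of indicators of the closed half-spaces $\{(w(x_i,\bar x)-w(x_j,\bar x))'\beta\leq 0\}$ and $\{(w(x_i,\underline x)-w(x_j,\underline x))'\beta\geq 0\}$. For fixed tetrad data these indicators are only upper semicontinuous in $\beta$, jumping exactly on the boundary hyperplanes where $(w(x_i,\bar x)-w(x_j,\bar x))'\beta=0$ or $(w(x_i,\underline x)-w(x_j,\underline x))'\beta=0$. The delicate step is therefore to show that, for each fixed $\beta$, these boundary events carry zero probability under the sampling of $(i,j,k,l)$, so that $\lambda(\,\cdot\,;\beta)$ is almost surely continuous at $\beta$; dominated convergence (the integrand is bounded by $1$) then upgrades $Q$ to a continuous, hence lower semicontinuous, function. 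This is where the continuity of $w$ in Assumption \ref{assu:DNF-con} together with the directional-variation and support requirements in Assumption \ref{assu:DNF-supportX} are used to rule out atoms of the index differences on hyperplanes through the origin. When such atoms cannot be excluded (e.g.\ with purely discrete covariates) $Q$ takes only finitely many values and is piecewise constant, so the infimum over $K_\epsilon$ is attained as a minimum over a finite value set and remains positive by part (ii); this finite-valued variant furnishes a fallback for the same conclusion. Handling this semicontinuity/boundary issue is the crux, while the rest of the argument is routine.
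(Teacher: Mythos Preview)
Your proof is correct and follows essentially the same route as the paper: Lemma~\ref{lem:DNF1} for $Q(\beta_0)=0$, Lemma~\ref{lem:DNF2} for $Q(\beta)>0$ when $\beta\neq\beta_0$, and then continuity of $Q$ on the compact sphere obtained by showing the boundary hyperplanes $\{\Delta'\beta=0\}$ have probability zero and invoking dominated convergence. The paper compresses that last step into a one-line appeal to \citet*{newey1994asymp}, whereas you spell out the upper-semicontinuity issue and the dominated-convergence argument explicitly; the substance is the same.
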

Theorem \ref{thm:DNF-ID} follows from Lemma \ref{lem:DNF2} based
on the standard arguments in \citet*{newey1994asymp}.
\begin{rem*}[Asymmetry of $w$, Continued]
In Appendix \ref{subsec:ID_asym}, we show how the identification
arguments and assumptions above can be adapted to accommodate asymmetry
of $w$. In short, the technique of logical differencing applies without
changes, but the identifying restriction we obtained becomes weaker.
In particular, when $w$ is \emph{antisymmetric} in the sense that
$w\left(\ol x,\ul x\right)+w\left(\ul x,\ol x\right)\equiv0$, the
identifying restriction we obtained through logical differencing becomes
trivial, and $B_{0}=\S^{d-1}$. However, with asymmetric but not antisymmetric
$w$, it is still feasible to strengthen Assumption 3 so as to obtain
point identification. See more discussions in Appendix \ref{subsec:ID_asym}.
\end{rem*}

\subsection{\label{subsec:DNF_est}Tetrad Estimator and Consistency}

We now proceed to present a consistent estimator of $\b_{0}$ in the
framework of extremum estimation, which we construct using a two-step
semiparametric estimation procedure. We clarify that we are considering
the asymptotics under ``a single large network'' with the number
of individuals $N\to\infty$. Moreover, we focus on the ``dense network''
asymptotics where the conditional linking probabilities are nondegenerate
in the limit and can be consistently estimated.

The first step is the nonparametric estimation of 
\[
\rho_{i}\left(x\right):=\E\left[D_{ik}\big|i,X_{k}=x\right].
\]
To implement this, we fix an individual $i$ in the sample, and regress
$D_{ik}$, the indicator function for the link between $i$ and $k$,
on the basis functions chosen by the researcher evaluated at observable
characteristics $X_{k}$ for all $k\neq i$. To guarantee that a consistent
nonparametric estimator of $\rho_{i}$ exists, we need to impose some
regularity conditions on $\rho_{i}$. We state the following assumption
as an illustrative set of such conditions, acknowledging that there
may be many different versions that also work.
\begin{assumption}[Regularity Conditions for $\rho_{i}$]
\label{assu:uniform-reg} (i) $Supp\left(X_{i}\right)$ is bounded
and convex with nonempty interior; (ii) for each fixed $i$, $\rho_{i}\in{\cal C}_{M}^{d_{x}+1}\left(Supp\left(X_{i}\right)\right)$,
where ${\cal C}_{M}^{d_{x}+1}\left(Supp\left(X_{i}\right)\right)$
denotes the class of functions on $Supp\left(X_{i}\right)$ whose
derivatives are uniformly bounded by $M$ up to order $d_{x}+1$.
\end{assumption}
Assumption \ref{assu:uniform-reg} essentially requires that $\rho_{i}\left(x\right)$
is smooth enough in $x$. Given that
\[
\rho_{i}\left(x\right)=\int\phi\left(w\left(\ol x,x\right)^{'}\b_{0},a_{i},A_{k}\right)d\P\left(\rest{A_{k}}X_{k}=x\right)
\]
is an integral of $\phi$ (a strictly increasing function bounded
between 0 and $1$) over the conditional distribution of $A_{k}$,
Assumption \ref{assu:uniform-reg} is easily satisfied, say, if both
$\phi$ and the conditional density of $A_{k}$ given $X_{k}=x$ have
uniformly bounded derivatives up to order $d+1$, when $w$ is taken
to be the coordinate-wise Euclidean distance function.
\begin{lem}
\label{lem:est_1step}Given Assumption \ref{assu:uniform-reg}, for
each $i$, there exists an estimator $\hat{\rho}_{i}\in{\cal C}_{M}^{d_{x}+1}\left({\cal X}\right)$
that is $L_{2}\left(\P_{X}\right)$ consistent, i.e., 
\[
\norm{\hat{\rho}_{i}-\rho_{i}}_{L_{2}\left(\P_{X}\right)}:=\sqrt{\int\left(\hat{\rho}_{i}\left(x\right)-\rho_{i}\left(x\right)\right)^{2}d\P_{X_{k}}\left(x\right)}=o_{p}\left(1\right).
\]
\end{lem}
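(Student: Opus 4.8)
The plan is to estimate $\rho_{i}$ by constrained nonparametric least squares over the smoothness ball itself, and to read off $L_{2}$ consistency from a textbook M-estimation argument once the network dependence among the link indicators has been neutralized. The only feature specific to the single-network design is that the indicators $\left\{ D_{ik}\right\} _{k\neq i}$ all involve the same individual $i$ and hence are not independent across $k$. I would remove this by conditioning on the ``type'' $\left(X_{i},A_{i}\right)$ of the fixed individual $i$: by model \eqref{eq:Model_NF} each $D_{ik}$ is a measurable function of $\left(X_{i},A_{i},X_{k},A_{k},\e_{ik},\e_{ki}\right)$, and by Assumption \ref{assu:DNF-rs} together with the i.i.d. idiosyncratic shocks, the triples $\left(X_{k},A_{k},\left(\e_{ik},\e_{ki}\right)\right)$ are i.i.d. across $k\neq i$. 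Therefore, conditional on $\left(X_{i},A_{i}\right)=\left(x_{i},a_{i}\right)$, the pairs $\left\{ \left(X_{k},D_{ik}\right)\right\} _{k\neq i}$ form an i.i.d. sample with bounded response $D_{ik}\in\left\{ 0,1\right\} $ and conditional mean $\E\left[\rest{D_{ik}}X_{k}=x,X_{i}=x_{i},A_{i}=a_{i}\right]=\rho_{i}\left(x\right)$, exactly as in \eqref{eq:C5_rhorho}, so the problem collapses to ordinary nonparametric regression on $n-1$ i.i.d. observations.

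Second I would define
\[
\hat{\rho}_{i}:=\arg\min_{g\in{\cal C}_{M}^{d_{x}+1}\left({\cal X}\right)}\frac{1}{n-1}\sum_{k\neq i}\left(D_{ik}-g\left(X_{k}\right)\right)^{2},
\]
so that by construction $\hat{\rho}_{i}\in{\cal C}_{M}^{d_{x}+1}\left({\cal X}\right)$, while Assumption \ref{assu:uniform-reg}(ii) places the target $\rho_{i}$ in the feasible set, so that no approximation (sieve) bias arises. The remaining ingredient is a uniform law of large numbers over the constrained class: on the bounded convex domain ${\cal X}$ the functions in ${\cal C}_{M}^{d_{x}+1}\left({\cal X}\right)$ are uniformly bounded with uniformly bounded first derivatives, hence equicontinuous, so by Arzel\`{a}--Ascoli the class is compact, and therefore totally bounded, in the sup-norm; together with the bounded response this makes the squared-loss class Glivenko--Cantelli and yields
\[
\sup_{g\in{\cal C}_{M}^{d_{x}+1}\left({\cal X}\right)}\left|\frac{1}{n-1}\sum_{k\neq i}\left(D_{ik}-g\left(X_{k}\right)\right)^{2}-\E\left[\rest{\left(D_{ik}-g\left(X_{k}\right)\right)^{2}}X_{i},A_{i}\right]\right|\pto0
\]
conditional on $\left(X_{i},A_{i}\right)$.

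Third, because $X_{k}$ is independent of $\left(X_{i},A_{i}\right)$ under Assumption \ref{assu:DNF-rs}, the conditional population criterion admits the exact decomposition
\[
\E\left[\rest{\left(D_{ik}-g\left(X_{k}\right)\right)^{2}}X_{i},A_{i}\right]=\E\left[\rest{\left(D_{ik}-\rho_{i}\left(X_{k}\right)\right)^{2}}X_{i},A_{i}\right]+\norm{g-\rho_{i}}_{L_{2}\left(\P_{X}\right)}^{2},
\]
so that $\rho_{i}$ is its unique $L_{2}\left(\P_{X}\right)$-minimizer and the criterion gap equals precisely $\norm{g-\rho_{i}}_{L_{2}\left(\P_{X}\right)}^{2}$. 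This well-separatedness of the minimum, combined with the uniform convergence above and the fact that $\hat{\rho}_{i}$ minimizes the empirical criterion while $\rho_{i}$ is feasible, delivers $\norm{\hat{\rho}_{i}-\rho_{i}}_{L_{2}\left(\P_{X}\right)}\pto0$ conditional on $\left(X_{i},A_{i}\right)$ by the standard argmin-consistency theorem (\citealp{newey1994asymp}).

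Finally I would remove the conditioning. The conditional statement gives $\P\left(\norm{\hat{\rho}_{i}-\rho_{i}}_{L_{2}\left(\P_{X}\right)}>\e\mid X_{i},A_{i}\right)\to0$ for almost every realization $\left(x_{i},a_{i}\right)$; since these conditional exceedance probabilities are bounded by one, the bounded convergence theorem yields $\P\left(\norm{\hat{\rho}_{i}-\rho_{i}}_{L_{2}\left(\P_{X}\right)}>\e\right)\to0$, which is the claimed $o_{p}\left(1\right)$. The only genuinely non-routine step is the first one --- the within-$i$ dependence of the link indicators, which is the price of the single-network design; once conditioning on $\left(X_{i},A_{i}\right)$ restores the i.i.d. structure, the argument is standard, and the only other point worth checking, the Glivenko--Cantelli property of the constrained class, follows immediately from Arzel\`{a}--Ascoli once the derivatives are uniformly bounded, which Assumption \ref{assu:uniform-reg} guarantees.
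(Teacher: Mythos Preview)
Your argument is correct and in fact far more complete than what the paper offers: the paper does not prove this lemma at all but simply asserts that it ``follows from the large literature on many different types of consistent nonparametric estimators,'' citing \citet{Bierens1983kernel} for kernel estimators and \citet{chen2007sieve} for sieve estimators, and noting that a spline-based sieve is used in the simulations. Your route is genuinely different in two respects. First, you construct $\hat{\rho}_{i}$ by constrained least squares over the smoothness ball ${\cal C}_{M}^{d_{x}+1}\left({\cal X}\right)$ itself, which guarantees $\hat{\rho}_{i}\in{\cal C}_{M}^{d_{x}+1}\left({\cal X}\right)$ by construction and eliminates any sieve approximation bias; the paper instead appeals to off-the-shelf kernel or sieve results, for which membership of $\hat{\rho}_{i}$ in ${\cal C}_{M}^{d_{x}+1}$ (as the lemma statement asks) would have to be enforced separately. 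Second, you explicitly handle the within-$i$ dependence of the link indicators $\left\{ D_{ik}\right\} _{k\neq i}$ by conditioning on $\left(X_{i},A_{i}\right)$ to recover an i.i.d. regression, and then unconditioning via bounded convergence; the paper's one-line citation does not address this single-network complication at all. What the paper's approach buys is brevity and practical convenience (splines are easy to implement), while yours buys a self-contained argument that makes the role of Assumption \ref{assu:uniform-reg} transparent: the uniform derivative bound is exactly what drives Arzel\`{a}--Ascoli compactness and hence the Glivenko--Cantelli property.
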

Lemma \ref{lem:est_1step} follows from the large literature on many
different types of consistent nonparametric estimators. See \citet{Bierens1983kernel}
for results on kernel estimators and \citet{chen2007sieve} on sieve
estimators. In our simulation and application, we use a spline-based
sieve estimator.\medskip{}

In the second step, we use $\hat{\rho}_{i}$ to build the following
sample analog of the population criterion $Q^{\g}\left(\b\right)$
in \eqref{eq:pop_cri}:

\begin{equation}
\begin{aligned}\widehat{Q}_{n}^{\g}\left(\b\right):=\frac{\left(n-4\right)!}{n!}\sum_{i,j,k,l}\g\left(\hat{\rho}_{i}\left(X_{k}\right)-\hat{\rho}_{j}\left(X_{k}\right)\right)\g\left(\hat{\rho}_{j}\left(X_{l}\right)-\hat{\rho}_{i}\left(X_{l}\right)\right)\l\left(X_{k},X_{l};X_{i},X_{j};\b\right).\end{aligned}
\label{eq:sampeQ}
\end{equation}
The two-step tetrad estimator for $\b_{0}$ is then defined as

\begin{equation}
\widehat{\b}_{n}:=\arg\min_{\b\in\text{\ensuremath{\mathbb{S}}}^{d-1}}\widehat{Q}_{n}^{\g}\left(\b\right).\label{eq:beta^hat}
\end{equation}

Since $\g$ can be arbitrarily chosen as long as it preserves strict
positiveness as in Section \ref{subsec:DNF_est}, we now impose the
following continuity assumption for $\g$.
\begin{assumption}[Continuity of $\g$]
\label{assu:as_gamma} The one-sided sign-preserving function $\g$
is Lipschitz-continuous.
\end{assumption}
Assumption \ref{assu:as_gamma} can be achieved by setting, say, $\g\left(t\right):=\left[t\right]_{+}$
or $\g\left(t\right):=2\times\Phi\left(\left[t\right]_{+}\right)-1$,
where $\left[t\right]_{+}$ is the positive part of $t$, and $\Phi$
is the CDF of the standard normal distribution. See Section \ref{sec:C5_simu}
for details. The idea is that, when the difference $\rho_{i}\left(x\right)$
and $\rho_{j}\left(x\right)$ is small, the estimation of whether
$\rho_{i}\left(x\right)>\rho_{j}\left(x\right)$ may be relatively
imprecise, and therefore, we may wish to downweight such terms in
the criterion function.

Computationally, to exploit the topological characteristics of the
parameter space $\S^{d-1}$, i.e. compactness and convexity, we develop
a new bisection-style nested rectangle algorithm that recursively
shrinks and refines an adaptive grid on the angle space. The key novelty
of the algorithm is that instead of working with the edges of the
Euclidean parameter space $\R^{d}$, we deterministically ``cut''
the angle space in each dimension of $\S^{d-1}$ to search for the
area that minimizes $\widehat{Q}_{n}^{\g}(\b)$. Additional measures
are taken to ensure the search algorithm is conservative. Simulation
and empirical results show that our algorithm performs reasonably
well with a relatively small sample size. \citet{gao2018robust} provides
more details regarding the implementation in a panel multinomial choice
setting.

We now state the consistency result of the tetrad estimator $\widehat{\b}_{n}$
under point identification.
\begin{thm}
\label{thm:DNF-consistent}Under model \eqref{eq:Model_NF} and Assumptions
\ref{as1prime}, \ref{assu:DNF-rs}--\ref{assu:as_gamma}, $\hat{\b}_{n}$
is consistent for $\b_{0}$, i.e.,
\[
\widehat{\beta}_{n}\pto\b_{0}.
\]
\end{thm}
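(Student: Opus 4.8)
The plan is to verify the hypotheses of the standard consistency theorem for extremum estimators (e.g. Theorem 2.1 of \citet{newey1994asymp}): a compact parameter space, a population criterion that is continuous and well-separatedly minimized at $\b_0$, and uniform convergence of the sample criterion to its population counterpart. The space $\S^{d-1}$ is compact. The unique-minimizer property is inherited from Theorem \ref{thm:DNF-ID}: since $\g$ preserves strict positivity we have $\tau_{ij}^{\g}>0\iff\tau_{ij}>0$, so $Q^{\g}\left(\b\right)=0\iff Q\left(\b\right)=0$, and because both criteria are nonnegative, $\b_{0}$ is the unique minimizer of $Q^{\g}$ defined in \eqref{eq:smooth_Q} as well. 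Continuity of $Q^{\g}$ in $\b$ follows by dominated convergence once the halfspace boundaries $\left\{ \left(w\left(X_{i},X_{k}\right)-w\left(X_{j},X_{k}\right)\right)^{'}\b=0\right\}$ are shown to carry zero probability for each $\b$, which holds under the continuous (nonatomic) distribution of $X$ in Assumption \ref{assu:uniform-reg} and the continuity of $w$ in Assumption \ref{assu:DNF-con}. Continuity together with compactness and uniqueness then upgrades the minimizer to a well-separated one, so the entire task reduces to showing $\sup_{\b\in\S^{d-1}}\abs{\widehat{Q}_{n}^{\g}\left(\b\right)-Q^{\g}\left(\b\right)}\pto0$.

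I would establish this uniform convergence by interposing the infeasible U-statistic
\[
\widetilde{Q}_{n}^{\g}\left(\b\right):=\frac{\left(n-4\right)!}{n!}\sum_{i,j,k,l}\g\bigl(\rho_{i}\left(X_{k}\right)-\rho_{j}\left(X_{k}\right)\bigr)\g\bigl(\rho_{j}\left(X_{l}\right)-\rho_{i}\left(X_{l}\right)\bigr)\l\left(X_{k},X_{l};X_{i},X_{j};\b\right),
\]
which plugs in the true $\rho_{i}$, and bounding
\[
\sup_{\b\in\S^{d-1}}\abs{\widehat{Q}_{n}^{\g}\left(\b\right)-Q^{\g}\left(\b\right)}\le\sup_{\b\in\S^{d-1}}\abs{\widehat{Q}_{n}^{\g}\left(\b\right)-\widetilde{Q}_{n}^{\g}\left(\b\right)}+\sup_{\b\in\S^{d-1}}\abs{\widetilde{Q}_{n}^{\g}\left(\b\right)-Q^{\g}\left(\b\right)}.
\]
For the second term, recall from \eqref{eq:C5_rhorho} that $\rho_{i}\left(\cd\right)=\psi_{\left(\cd\right)}\bigl(w\left(X_{i},\cd\right)^{'}\b_{0},A_{i}\bigr)$ is a fixed function of $\left(X_{i},A_{i}\right)$, which are i.i.d. across $i$ by Assumption \ref{assu:DNF-rs}; hence $\widetilde{Q}_{n}^{\g}$ is a genuine fourth-order U-statistic with a bounded kernel that depends on $\b$ only through $\l\left(\cd;\b\right)$. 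The class $\left\{ \l\left(\cd;\b\right):\b\in\S^{d-1}\right\}$ is a product of two halfspace-indicator classes and is therefore VC, so a uniform law of large numbers for U-processes indexed by a VC class (via the Hoeffding decomposition, as in the analogous panel argument of \citet{gao2018robust}) yields $\sup_{\b}\abs{\widetilde{Q}_{n}^{\g}\left(\b\right)-Q^{\g}\left(\b\right)}\pto0$.

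For the first term -- the cost of replacing $\rho_{i}$ by the first-step estimate $\hat{\rho}_{i}$ -- I would exploit the Lipschitz continuity of $\g$ (Assumption \ref{assu:as_gamma}) together with $\abs{\l}\le1$, which decouples $\b$ from the first-step error. Writing $\bar{\g}$ for a bound on $\g$ over $\left[-1,1\right]$ and $L$ for its Lipschitz constant, telescoping the product gives, uniformly in $\b$,
\[
\sup_{\b\in\S^{d-1}}\abs{\widehat{Q}_{n}^{\g}\left(\b\right)-\widetilde{Q}_{n}^{\g}\left(\b\right)}\le\bar{\g}L\,\frac{\left(n-4\right)!}{n!}\sum_{i,j,k,l}\Bigl(\abs{\hat{\rho}_{i}\left(X_{k}\right)-\rho_{i}\left(X_{k}\right)}+\abs{\hat{\rho}_{j}\left(X_{k}\right)-\rho_{j}\left(X_{k}\right)}+\abs{\hat{\rho}_{i}\left(X_{l}\right)-\rho_{i}\left(X_{l}\right)}+\abs{\hat{\rho}_{j}\left(X_{l}\right)-\rho_{j}\left(X_{l}\right)}\Bigr),
\]
which collapses to a constant multiple of the double average $\frac{1}{n\left(n-1\right)}\sum_{i\neq k}\abs{\hat{\rho}_{i}\left(X_{k}\right)-\rho_{i}\left(X_{k}\right)}$. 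Because $\hat{\rho}_{i},\rho_{i}$ both lie in the uniformly bounded smooth (hence Glivenko--Cantelli) class $\mathcal{C}_{M}^{d_{x}+1}$ of Assumption \ref{assu:uniform-reg}, the empirical $L_{2}$ distance equals the population distance $\norm{\hat{\rho}_{i}-\rho_{i}}_{L_{2}\left(\P_{X}\right)}$ up to $o_{p}\left(1\right)$; combined with $L_{1}\le L_{2}$ and the per-$i$ consistency $\norm{\hat{\rho}_{i}-\rho_{i}}_{L_{2}\left(\P_{X}\right)}=o_{p}\left(1\right)$ from Lemma \ref{lem:est_1step}, the inner average over $k$ is $o_{p}\left(1\right)$. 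Since the errors are identically distributed across $i$ by exchangeability and bounded (both $\rho_{i}$ and its truncation lie in $\left[0,1\right]$), bounded convergence gives $\E\norm{\hat{\rho}_{i}-\rho_{i}}_{L_{2}\left(\P_{X}\right)}\to0$, so Markov's inequality turns the outer average over $i$ into convergence in probability. Combining both terms and invoking the Newey--McFadden argument then delivers $\widehat{\b}_{n}\pto\b_{0}$.

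The hardest part will be the first-step term: the $n$ nonparametric estimates $\hat{\rho}_{i}$ are all built from the same shared network and are therefore mutually dependent, so $\frac{1}{n}\sum_{i}\norm{\hat{\rho}_{i}-\rho_{i}}_{L_{2}\left(\P_{X}\right)}$ cannot be treated as an average of independent terms; the aggregation must lean on exchangeability and uniform integrability rather than independence. One must also ensure the plug-in error vanishes even after being summed over all $O\left(n^{4}\right)$ tetrads -- here the $1/n!$ normalization and the decoupling of $\b$ from the first-step error through the trivial bound $\abs{\l}\le1$ are exactly what keep the bound under control, while the Glivenko--Cantelli property of $\mathcal{C}_{M}^{d_{x}+1}$ is what allows the population-level guarantee of Lemma \ref{lem:est_1step} to transfer to the empirical averages that actually appear in the criterion.
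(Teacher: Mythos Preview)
Your proposal is correct and follows essentially the same route as the paper: invoke a standard extremum-estimator consistency theorem, inherit well-separated identification from Theorem \ref{thm:DNF-ID}, and prove uniform convergence via the same triangle-inequality split around the infeasible criterion $\widetilde{Q}_{n}^{\g}$, handling the first-step plug-in error with Lipschitz $\g$ and $\abs{\l}\le1$ and the infeasible-versus-population term with a VC/U-process ULLN. The only cosmetic differences are that the paper cites \citet{arcones1993limit} and the Glivenko--Cantelli preservation theorem of \citet{van2000preservation} (indexing the kernel class by $\left(\rho_{i},\rho_{j},\b\right)$ rather than $\b$ alone), and bounds the plug-in error by taking expectations directly and applying Cauchy--Schwarz rather than your route through the empirical-to-population $L_{2}$ comparison; neither difference is material.
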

\begin{rem}
Our estimator shares some similarity with the maximum score estimator:
the parameter $\b$ enters into the sample criterion through indicators
of halfspaces about $\b$, which creates discreteness in the sample
criterion. However, our estimator features an additional complication
not found in maximum score estimation: we require the first-stage
nonparametric estimators $\hat{\rho}_{i}$, and we need to plug the
first-stage estimators into nonlinear functions that isolate the ``positive
side'' only: $\g\left(\hat{\rho}_{i}\left(X_{k}\right)-\hat{\rho}_{j}\left(X_{k}\right)\right)\g\left(\hat{\rho}_{j}\left(X_{l}\right)-\hat{\rho}_{i}\left(X_{l}\right)\right)$
will be nonzero only when $\hat{\rho}_{i}\left(X_{k}\right)>\hat{\rho}_{j}\left(X_{k}\right)$
and $\hat{\rho}_{j}\left(X_{l}\right)>\hat{\rho}_{i}\left(X_{l}\right)$.
In fact, it is precisely the ``double-threshold'' feature of our
model that leads to the necessity of a semiparametric two-step estimation
procedure, relative to the one-step procedure of the maximum score
estimator under a ``single-threshold'' setting. Given that the asymptotic
theory for the maximum score estimator is already nonstandard (with
cubic-root rate of convergence and non-normal Chernoff-type asymptotic
distribution as in \citealp{kim1990cube}), the addition of the first-stage
nonparametric estimation further complicates the asymptotic theory
to a highly nontrivial extent. The first-stage nonparametric estimation
of $\rho_{i}$ may further slow the rate of convergence below $n^{-1/3}$;
in the meanwhile, if we take $\g$ to be smooth (but necessarily nonlinear),
the term $\g\left(\hat{\rho}_{i}\left(X_{k}\right)-\hat{\rho}_{j}\left(X_{k}\right)\right)\g\left(\hat{\rho}_{j}\left(X_{l}\right)-\hat{\rho}_{i}\left(X_{l}\right)\right)$
may also provide some effective smoothing on the discrete $\l_{ij}$
term, when $\rho_{i}\left(X_{k}\right)-\rho_{j}\left(X_{k}\right)$
or $\rho_{j}\left(X_{l}\right)-\rho_{i}\left(X_{l}\right)$ is closer
to zero. It is not exactly clear which effect dominates, or whether
the two effects can be balanced, in our current setting. Due to such
technical difficulties, we defer the investigation of such types of
``two-stage maximum score estimators'' in a separate paper by \citet*{gao2020two},
albeit in a simpler setting.
\end{rem}

\section{\label{sec:C5_simu}Simulation}

In this section, we conduct a simulation study to analyze the finite-sample
performance of our two-step tetrad estimator. To begin with, we calculate
and plot the identified set $B_{0}$ via \eqref{eq:pop_cri} for various
support restrictions on $X$\footnote{We thank the Editor for this suggestion.}.
The graph illustrates how the size of the identified set $B_{0}$
based on our population criterion \eqref{eq:pop_cri} changes when
the support of $X$ contains more and more discrete variables. Then,
we specify the data generating process (DGP) of the Monte Carlo simulations.
We show and discuss the performance of our two-step estimation method
under the baseline setup with symmetric $w\left(\cd,\cd\right)$ function.
In Appendix \ref{subsec:Sim_Robustness}, we vary the number of individuals
$N$, the dimension of the pairwise observable characteristics $d$,
and the degree of correlation between $X$ and $A$, as well as allow
for asymmetric $w$ function to further examine the robustness of
our method.

\subsection{Identified Set $B_{0}$\label{subsec:Identified-Set}}

In this section, we provide a graphical illustration of the identified
set $B_{0}$ for various support restrictions on $X$. The analysis
is based on the following network formation model:
\begin{equation}
D_{ij}=\ind\left\{ w\left(X_{i},X_{j}\right)^{'}\b_{0}+A_{i}>\epsilon_{ij}\right\} \cd\ind\left\{ w\left(X_{j},X_{i}\right)^{'}\b_{0}+A_{j}>\epsilon_{ji}\right\} ,\label{eq:simrule}
\end{equation}
where $w$ is taken to be the coordinate-wise absolute difference,
i.e., $w_{h}\left(\ol x,\ul x\right):=\left|\ol x_{h}-\ul x_{h}\right|$
for $h=1,...,d$. We set $d_{x}=d=3$ and $\beta_{0}=\left(1,1,1\right)/6^{'}$\footnote{Recall that only the direction of $\b_{0}$ is identified. Here we
divide the vector $\left(1,1,1\right)^{'}$ by 6 to ensure the network
is non-degenerate numerically, i.e., not all agents are connected,
nor are they all disconnected.}. We incorporate correlation between $X$ and $A$ by drawing $A_{i}=\left(X_{i,1}+\xi_{i}\right)/4$,
where $\xi_{i}$ is independently and uniformly distributed on $\left[-0.5,0.5\right]$
and independent of all other variables. $\epsilon_{ij}$ is the exogenous
random shock independent of all other variables and uniformly distributed
on $\left[0,1\right]$.

The purpose of the exercise is to show how the support restrictions
on $X$ affect the size of the identified set $B_{0}$. To this end,
we consider the following five support conditions on $X$: (1) all
coordinates of $X$ are uniformly distributed on $[-0.5,0.5]$; (2)
$X_{i,1}$ is binary with equal probability on $\left\{ 0,1\right\} $,
while $X_{i,2}$ and $X_{i,3}$ are both uniformly distributed on
$[-0.5,0.5]$; (3) $X_{i,1}$ is binary with equal probability on
$\left\{ 0,1\right\} $, $X_{i,2}$ is discrete with equal probability
on 11 points of $\left\{ -0.5,-0.4,..,0,..,0.4,0.5\right\} $, and
$X_{i,3}$ is uniformly distributed on $[-0.5,0.5]$; (4) all coordinates
of $X$ are discrete with equal probability on 101 points of $\left\{ -0.5,-0.49,-0.48..,0,..,0.49,0.5\right\} $;
(5) all coordinates of $X$ are discrete with equal probability on
11 points of $\left\{ -0.5,-0.4,..,0,..,0.4,0.5\right\} $.

$\b_{0}$ is point identified in (1) since the support of $X$ has
a nonempty interior, thus satisfying Assumption \ref{as3prime}. In
(2)--(5), $\b_{0}$ is not point identified due to discreteness and
boundedness of the support of $X$. Below we compare the sizes of
the identified sets when the $X$ vector contains none, one, two and
all discrete variables corresponding to DGP (1)--(5).

To calculate the ID set, we utilize the analytical formula of $\rho_{i}$
so that the true $\rho_{i}$ is calculated without error \footnote{We use the distribution of $\left(A,\epsilon,\xi\right)$ to obtain
the analytical formula for $\rho_{i}\left(x\right)$.}. Then, we numerically approximate the population criterion by setting
a large $N=1,000$ and $M=10,000$. We are able to work with a larger
$N$ and $M$ than in the simulations in the next subsection because
to get the identified set, we only calculate the maximizer of the
population criterion once without the need to estimate $\rho_{i}\left(x\right)$.
Still, one can improve the numerical results by increasing $N,M\gto\infty$
and extracting more info from the data when additional computational
power is available. Hence, all our results below are conservative:
the true ID set must be a subset of the result shown in Figure \ref{fig:id_set_plot}.

For clarity of illustration, we plot the results in the angle space\footnote{One can transform any $\b=\left(\b_{1},\b_{2},\b_{3}\right)^{'}$
on the unit sphere $\S^{2}$ into the angle space by letting $\b_{1}=\cos\t_{1}\cos\t_{2},\b_{2}=\cos\t_{1}\sin\t_{2},\b_{3}=\sin\t_{1}$
for $\left(\t_{1},\t_{2}\right)\in\left[-\pi/2,\pi/2\right]\times\left[-\pi,\pi\right]$.}. In all cases, we maintain the correlation between $X$ and $A$.
The results are summarized in Figure \ref{fig:id_set_plot}, where
we plot the identified sets $B_{0}$ on the full parameter space of
$\left[-\pi/2,\pi/2\right]\times\left[-\pi,\pi\right]$ and zoom it
in for more details.
\noindent \begin{center}
\begin{figure}
\noindent \begin{centering}
\includegraphics[scale=0.75]{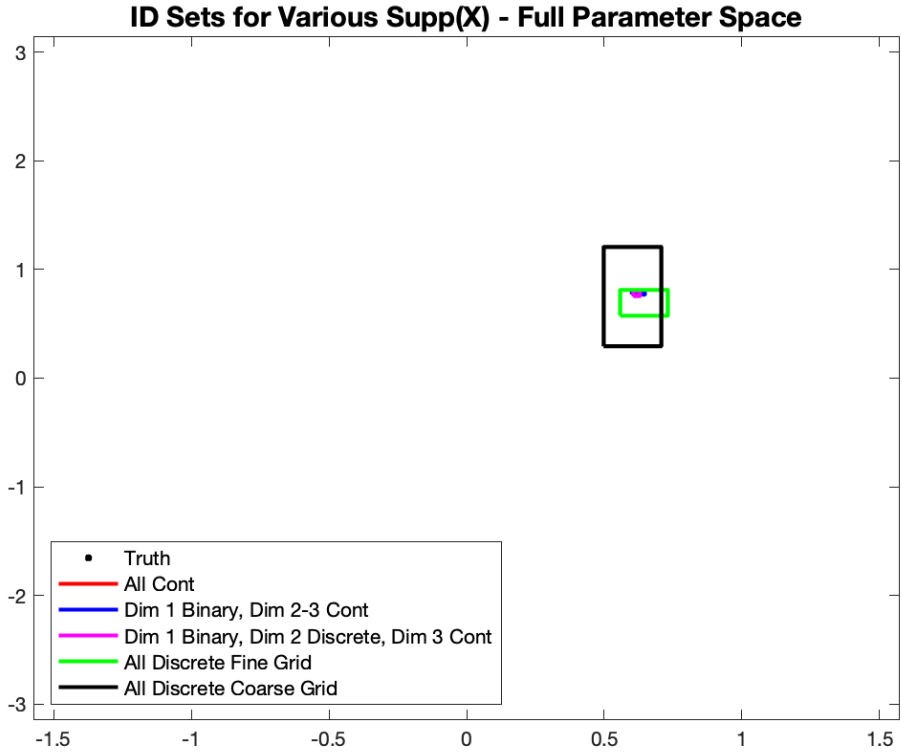}
\par\end{centering}
\noindent \centering{}\includegraphics[scale=0.75]{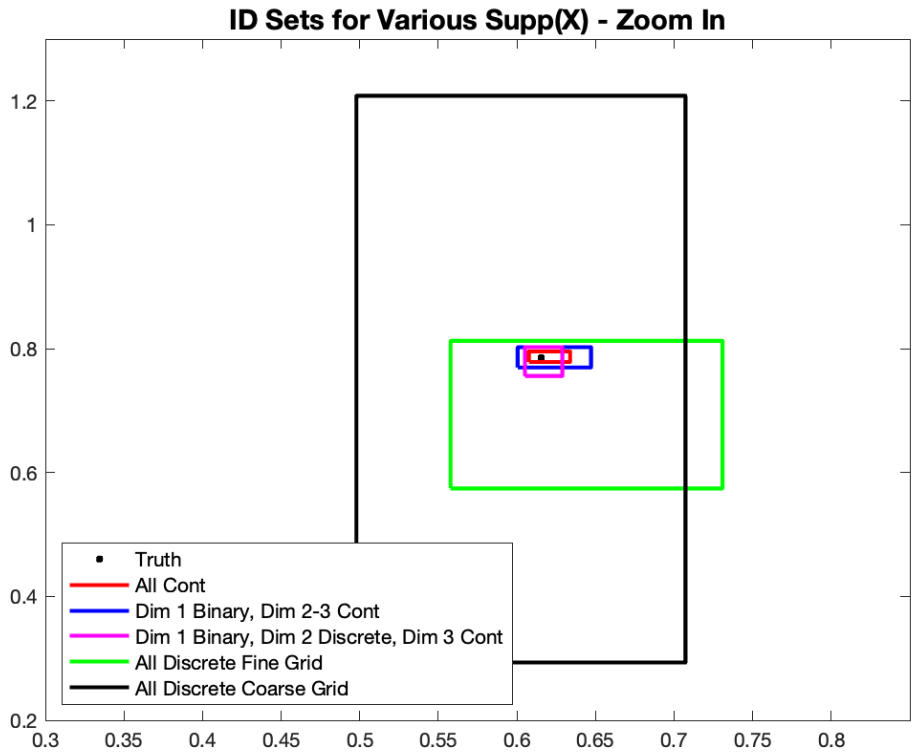}\caption{Identified Sets under Different Support Conditions for $X_{i}$\label{fig:id_set_plot}}
\end{figure}
\par\end{center}

In Figure \ref{fig:id_set_plot}, the black dot represents the true
$\b_{0}$, which is guaranteed to belong to the identified set $B_{0}$.
The red rectangle demonstrates the identified set when the supports
of all coordinates of $X$ are continuous and bounded. The theory
predicts point identification when the number of individuals $N$
and $\left(i,j\right)$ pairs $M$ go to infinity. Here the numerical
result is indeed very close to point identification. The blue rectangle
shows the identified set $B_{0}$ when the first dimension of $X$
is binary while all the other coordinates are continuous and bounded.
The size of $B_{0}$ is larger than in case (1), which is expected
due to the discreteness of $X_{i,1}$. The magenta rectangle corresponds
to the case when $X_{i,1}$ is binary, $X_{i,2}$ is discrete, and
$X_{i,3}$ is continuous and bounded. The size of $B_{0}$ is reasonably
small, given that there are two discrete variables in $X$ and one
of them is binary. The green and black rectangles illustrate $B_{0}$
when all coordinates of $X$ are discretely distributed with equal
probability on 101 (fine grid) and 11 (coarse grid) points between
-0.5 and 0.5, respectively. In these two scenarios, we see larger
identified sets than in (1)--(3). That said, the sets still appear
small relative to the full parameter space $\left[-\pi/2,\pi/2\right]\times\left[-\pi,\pi\right]$.
To summarize, the size of the identified set $B_{0}$ based on logical
differencing is reasonably small under each of the five settings.

\subsection{\label{subsec:Setup-of-Simulation} Performance of the Tetrad Estimator}

We maintain \eqref{eq:simrule} as our network formation model. We
draw each coordinate of $X_{i}$ independently from a uniform distribution
on $\left[-0.5,0.5\right]$ and calculate $w$ using coordinate-wise
absolute difference. Point identification is guaranteed since Assumption
\ref{as3prime} is satisfied. We set $A_{i}=corr\times X_{i,1}+\left(1-corr\right)\times\xi_{i}$,
and use the same distributions as in Section \ref{subsec:Identified-Set}
to generate $\xi_{_{i}}$ and $\epsilon_{ij}$. We set the true $\b_{0}$
to be $(1,...,1)^{'}/\sqrt{d}\in\S^{d-1}$ , and estimate the direction
of $\b_{0}$.

For the baseline result in the main text, we fix the number of individuals
$N=100$, the dimension of $X$ (also $W_{ij}:=w\left(X_{i},X_{j}\right)$
and $\b_{0}$) $d_{x}=d=3$, and $corr=0.2$. In Appendix \ref{subsec:Sim_Robustness},
we vary $N$, $d$ and $corr$, as well as allow for asymmetric $w$
function to investigate how robust our method is against various configurations.

To summarize, for each of the $B=100$ simulations we randomly generate
data on the characteristics of and the network structure among individuals.
Then, based on the observable $\left(X_{i},W_{ij},D_{ij}\right)_{i,j\in\left\{ 1,...,N\right\} }$
matrix we construct our two-step estimator $\widehat{\b}$ for the
true parameter of interest $\beta_{0}$. Specifically, we use a sieve
estimator with 2nd-order spline with its knot at median for the first-stage
nonparametric estimation of $\rho_{i}\left(\cd\right)$. The spline
is chosen to ensure a relatively small number of regressors in the
nonparametric regression considering the small size of $N$. In the
second stage, we adapt to the adaptive-gird search on the unit sphere
algorithm developed in \citet{gao2018robust} to calculate $\widehat{\b}$
that minimizes the sample criterion function $\widehat{Q}\left(\b\right)$
defined in \eqref{eq:sampeQ}\footnote{We suppress its dependence on $\gamma$ and $N$ for notational simplicity.
In all simulations and empirical application, we use smoothing function
$\g\left(t\right):=2\times\Phi\left(\left[t\right]_{+}\right)-1$.} over the unit sphere. It should be noted that, constrained by computational
power, when calculating the sample criterion $\widehat{Q}\text{\ensuremath{\left(\b\right)}}$
for each $\b\in\S^{d-1}$ we randomly draw $M=1,000$ $\left(i,j\right)$
pairs of individuals and vary across all possible $\left(k,l\right)$
pairs excluding $i$ or $j$. One can improve those results by increasing
$M$ when computational constraint is not present, so again our results
are conservative. Lastly, we compare our estimator $\widehat{\b}$
with $\b_{0}$ based on several performance metrics including root
mean squared error (rMSE), mean norm deviations (MND), and maximum
mean absolute deviation (MMAD).

We define for each simulation round $b$ the set estimator $\widehat{\Theta}_{b}$
as the set of points that achieve the minimum of $\widehat{Q}\text{\ensuremath{\left(\b\right)}}$
over the unit sphere $\S^{d-1}$. We further define for each simulation
$b=1,...,B$ and each dimension $h=1,...,d$ of $\beta$ 
\[
\widehat{\b}_{b,h}^{l}:=\min\widehat{\Theta}_{b,h},\quad\widehat{\b}_{b,h}^{u}:=\max\widehat{\Theta}_{b,h},\quad\widehat{\b}_{b,h}^{m}:=\dfrac{1}{2}\left(\widehat{\b}_{b,h}^{l}+\widehat{\b}_{b,h}^{u}\right),
\]
where $\widehat{\b}_{b,h}^{l}$, $\widehat{\beta}_{b,h}^{u}$ and
$\widehat{\b}_{b,h}^{m}$ are the minimum, maximum, and middle point
along dimension $h$ for simulation round $b$ of the identified set
$\widehat{\Theta}_{b}$, respectively. One can consider $\widehat{\b}^{m}$
as the point estimator for $\beta_{0}$. Note by construction for
each simulation round $b$, the identified set $\widehat{\Theta}_{b}$
is a subset of the rectangle 
\[
\widehat{\Xi}_{b}:=\times_{h=1}^{d}\left[\widehat{\b}_{b,h}^{l},\ \widehat{\b}_{b,h}^{u}\right].
\]
We calculate the baseline performance using $\widehat{\b}^{l},\widehat{\b}^{u},\widehat{\b}^{m}$
respectively.

We report in Table \ref{tab:C5_Sim_sum} the performance of our estimators.
In the first row, we calculate the mean bias across $B=100$ simulations
using $\widehat{\b}^{m}$ along each dimension $h=1,...,d$. The result
shows the estimation bias is very small across all dimensions with
a magnitude between -0.0053 and 0.0052. Similar performance is observed
using $\widehat{\b}^{u}$ and $\widehat{\b}^{l}$ as shown in row
2 and 3. We do not find any sign of persistent over/under- estimation
of $\beta_{0}$ across each dimension. Row 4 measures the average
width along each dimension of the rectangle $\widehat{\Xi}_{b}$ over
$B$ simulations. The average size of $\widehat{\Xi}_{b}$ is small,
indicating a tight area for the estimated set. In the second part
of Table \ref{tab:C5_Sim_sum} we report rMSE, MND, and MMAD, all
of which are small in magnitude and provide evidence that our estimator
works well in finite sample.

\begin{table}
\caption{\label{tab:C5_Sim_sum}Baseline Performance}

\bigskip{}

\noindent \centering{}%
\begin{tabular}{ccccc}
\hline 
$\phantom{\frac{\frac{1}{1}}{\frac{1}{1}}}$ &  & $\b_{1}$ & $\b_{2}$ & $\b_{3}$\tabularnewline
\hline 
$\phantom{\frac{\frac{1}{1}}{\frac{1}{1}}}$bias & $\frac{1}{B}\sum_{b}\left(\hat{\b}_{b,h}^{m}-\b_{0,h}\right)$ & -0.0021 & 0.0052 & -0.0053\tabularnewline
$\phantom{\frac{\frac{1}{1}}{\frac{1}{1}}}$upper bias & $\frac{1}{B}\sum_{b}\left(\hat{\b}_{b,h}^{u}-\b_{0,h}\right)$ & 0.0048 & 0.0118 & -0.0002\tabularnewline
$\phantom{\frac{\frac{1}{1}}{\frac{1}{1}}}$lower bias & $\frac{1}{B}\sum_{b}\left(\hat{\b}_{b,h}^{l}-\b_{0,h}\right)$ & -0.0091 & -0.0015 & -0.0105\tabularnewline
$\phantom{\frac{\frac{1}{1}}{\frac{1}{1}}}$mean$(u-l)$ & $\frac{1}{B}\sum_{b}\left(\hat{\b}_{b,h}^{u}-\hat{\b}_{b,h}^{l}\right)$ & 0.0138 & 0.0132 & 0.0103\tabularnewline
\hline 
 &  &  &  & \tabularnewline
\hline 
$\phantom{\frac{\frac{1}{1}}{\frac{1}{1}}}$rMSE & $\sqrt{\frac{1}{B}\sum_{b}\norm{\hat{\b}_{b}^{m}-\beta_{0}}^{2}}$ & \multicolumn{3}{c}{0.0488}\tabularnewline
$\phantom{\frac{\frac{1}{1}}{\frac{1}{1}}}$MND & $\frac{1}{B}\sum_{b}\norm{\hat{\b}_{b}^{m}-\beta_{0}}$ & \multicolumn{3}{c}{0.0417}\tabularnewline
$\phantom{\frac{\frac{1}{1}}{\frac{1}{1}}}$MMAD & $\max_{h\in\left\{ 1,..,d\right\} }\abs{\frac{1}{B}\sum_{b}\left(\hat{\b}_{b,h}^{m}-\b_{0,h}\right)}$ & \multicolumn{3}{c}{0.0053}\tabularnewline
\hline 
\end{tabular}
\end{table}

\section{\label{sec:Emp}Empirical Illustration}

As an empirical illustration, we estimate a network formation model
under NTU with data of a village network called Nyakatoke in Tanzania.
Nyakatoke is a small Haya community of 119 households in 2000 located
in the Kagera Region of Tanzania. We are interested in how important
factors, such as wealth, distance, and blood or religious ties, are
relative to each other in deciding the formation of risk-sharing links
among local residents. We apply our two-stage estimator to the Nyakatoke
network data and obtain economically intuitive results.

\subsection{\label{subsec:Data}Data Description}

The risk-sharing data of Nyakatoke, collected by Joachim De Weerdt
in 2000, cover all of the 119 households in the community. It includes
the information about whether or not two households are linked in
the insurance network. It also provides detailed information on total
USD assets and religion of each household, as well as kinship and
distance between households. See \citet{de2004risk}, \citet{DeWeerdt2006},
and \citet{de2011social} for more details of this dataset.

To define the dependent variable \textit{link}, the interviewer asks
each household the following question:

$\ $

``\textit{Can you give a list of people from inside or outside of
Nyakatoke, who you can personally rely on for help and/or that can
rely on you for help in cash, kind or labor?}'' 

$\ $

The data contains three answers of ``bilaterally mentioned'', ``unilaterally
mentioned'', and ``not mentioned'' between each pair of households.
Considering the question is about whether one can rely on the other
for help, we interpret both ``bilaterally mentioned'' and ``unilaterally
mentioned'' as they are connected in this undirected network, meaning
that the dependent variable $D_{ij}$ \textit{link} equals 1.\footnote{In the context of the village economies in our application, we think,
at the time of link formation, the risk-sharing links are less likely
(in comparison with the contexts of business or financial networks)
to be driven by efficient arrangements of side-payment transfers,
thus satisfying NTU.} We also ran a robustness check by constructing a weighted network
based on the answers, i.e. ``bilaterally mentioned'' means \textit{link}
equals 2, ``unilaterally mentioned'' means \textit{link} equals
1, and ``not mentioned'' means \textit{link} equals 0, and obtained
very similar results.

We estimate the coefficients for 3 regressors: \emph{wealth difference},
\emph{distance} and \emph{tie} between households, with our two-step
estimator. \textit{Wealth} is defined as the total assets in USD owned
by each household in 2000, including livestocks, durables and land.
\textit{Distance} measures how far away two households are located
in kilometers. \textit{Tie} is a discrete variable, defined to be
3 if members of one household are parents, children and/or siblings
of members of the other household, 2 if nephews, nieces, aunts, cousins,
grandparents and grandchildren, 1 if any other blood relation applies
or if two households share the same religion, and 0 if no blood religious
tie exists. Following the literature we take natural log on \textit{wealth}
and \textit{distance}, and we construct the \textit{wealth difference}
variable as the absolute difference in \emph{wealth}, i.e.,
\[
w\left(X_{i},X_{j}\right)=\left(\begin{array}{c}
\abs{\ln\text{wealth}_{i}-\ln\text{wealth}_{j}}\\
\ln\text{distance}_{ij}\\
\text{tie}_{ij}
\end{array}\right).
\]

Figure \ref{fig:NetworkGraph} illustrates the structure of the insurance
network in Nyakatoke. Each node in the graph represents a household.
The solid line between two nodes indicates they are connected, i.e.,
\textit{link} equals 1. The size of each node is proportional to the
USD wealth of each household. Each node is colored according to its
rank in \emph{wealth}: green for the top quartile, red for the second,
yellow for the third and purple for the fourth quartile.
\begin{center}
\begin{figure}[h]
\begin{centering}
\includegraphics[scale=0.45]{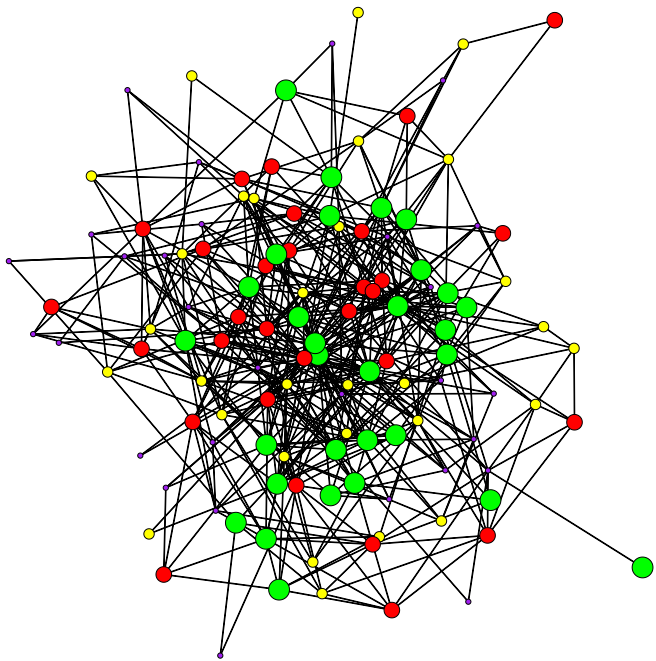}
\par\end{centering}
\caption{A Graphical Illustration of the Insurance Network of Nyakatoke\label{fig:NetworkGraph}}
\end{figure}
\par\end{center}

In the dataset there are 5 households that lack information on \textit{wealth}
and/or \textit{distance}. We drop these observations, resulting in
a sample size $N$ of 114. The total number of ordered household pairs
is 12,882. Summary statistics for the dependent and explanatory variables
used in our analysis are presented in Table \ref{tab:Empirical-Application:-Summary}.

\begin{table}
\caption{Empirical Application: Summary Statistics\label{tab:Empirical-Application:-Summary}}

\bigskip{}

\noindent \centering{}%
\begin{tabular}{cccccc}
\hline 
Variable & Obs & Mean & Std. Dev. & Min & Max\tabularnewline
\hline 
 &  &  &  &  & \tabularnewline
link & 12,882 & 0.0732 & 0.2606 & 0 & 1\tabularnewline
$\abs{\text{(ln) wealth difference}}$ & 12,882 & 1.0365 & 0.8228 & 0.0004 & 5.8898\tabularnewline
($\ln$) distance & 12,882 & 6.0553 & 0.7092 & 2.6672 & 7.4603\tabularnewline
tie & 12,882 & 0.4260 & 0.6123 & 0.0000 & 3.0000\tabularnewline
\hline 
\end{tabular}
\end{table}

Given the data structure, we should expect close to point identification
because even though the \textit{tie} variable is discrete, the other
two regressors \textit{wealth} and \textit{distance} can be considered
as being continuously distributed with a large conditional support.
In addition, \textit{tie} also has more than one point in its support
given other variables. Thus, it is straightforward to verify that
Assumption \textit{\emph{\ref{as4prime2}}}\emph{ }\textit{\emph{is
satisfied, leading to point identification of $\b_{0}$.}}

\subsection{\label{subsec: emp Results-and-Discussion}Results and Discussion}

We apply our two-stage estimator proposed in Section \ref{subsec:DNF_est}
to the Nyakatoke network data, where, similarly to the simulation
studies, we use second degree splines to estimate $\rho_{i}$ and
the adaptive grid search algorithm to find the minimizer of the criterion
function $\widehat{Q}\left(\b\right)$ on $\S^{d-1}$.

\begin{table}[h]
\caption{Empirical Application: Estimation Results\label{tab:EmpResults}}

\bigskip{}

\noindent \centering{}%
\begin{tabular}{ccc}
\hline 
\multirow{2}{*}{\textcolor{black}{$\phantom{\frac{\frac{1}{1}}{\frac{1}{1}}}$}Variable} & \multirow{2}{*}{$\hat{\b}^{m}$} & \multirow{2}{*}{$\left[\hat{\b}^{l},\ \hat{\b}^{u}\right]$}\tabularnewline
 &  & \tabularnewline
\hline 
 &  & \tabularnewline
\textcolor{black}{$\phantom{\frac{\frac{1}{1}}{\frac{1}{1}}}$}$\abs{\text{(ln) wealth difference}}$ & -0.1948 & $\left[-0.1964,\ -0.1932\right]$\tabularnewline
\textcolor{black}{$\phantom{\frac{\frac{1}{1}}{\frac{1}{1}}}$}($\ln$)
distance & -0.8036 & $\left[-0.8043,\ -0.8029\right]$\tabularnewline
\textcolor{black}{$\phantom{\frac{\frac{1}{1}}{\frac{1}{1}}}$}tie & 0.5619 & $\left[0.5608,\ 0.5630\right]$\tabularnewline
\hline 
\end{tabular}
\end{table}

Table \ref{tab:EmpResults} summarizes our estimation results. The
column of $\hat{\b}^{m}$ corresponds to the center of the estimated
rectangle 
\[
\widehat{\Xi}:=\times_{h=1}^{d}\left[\widehat{\b}_{h}^{l},\ \widehat{\b}_{h}^{u}\right],
\]
where $\widehat{\b}_{h}^{l}$ and $\widehat{\b}_{h}^{u}$ represent
the lower and upper bound of the estimated area (the set of minimizers
of sample criterion function \eqref{eq:sampeQ}) along each dimension
$h=1,..,d$. We use $\hat{\b}^{m}$ as the point estimator of the
coefficients. While the scale of $\b_{0}$ is unidentified, one can
still compare the estimated coefficients with each other to obtain
an idea about which variable affects the formation of the link more
than the other.

The estimated coefficients for each variable conform well with economic
intuition. The estimated set for\textit{ wealth difference}'s coefficient
is $\left[-0.1964,-0.1932\right]$, which implies the more difference
in wealth between two households, the lower likelihood they are connected.
The estimated set for \textit{distance} is $\left[-0.8043,-0.8029\right]$.
It is natural that households rely more on neighbors for help than
those who live farther away, thus neighbors are more likely to be
connected. The estimated coefficient for \textit{tie} falls in the
positive range of $\left[0.5608,0.5630\right]$, consistent with the
intuition that one would depend on support from family when negative
shock occurs.

It is worth mentioning the estimated set $\hat{\Xi}$ is very tight
in each dimension, with the maximum width $\max_{h=1,..,d}\left(\hat{\b}_{h}^{u}-\hat{\b}_{h}^{l}\right)$
equal to 0.0032 for \textit{wealth difference.} Usually the discreteness
in \textit{tie} could make the estimated set wide, but our method
is able to leverage the large support in \textit{wealth difference}
and \textit{distance}. Once again, the relative magnitude and sign
of the coefficient for \textit{tie} are estimated in line with expectation
despite of its discreteness. In summary, the empirical results show
that our proposed estimator is able to generate economically intuitive
estimates under NTU.

\section{\label{sec:C5_conc} Conclusion}

This paper considers a semiparametric model of dyadic network formation
under nontransferable utilities, a natural and realistic micro-theoretical
feature that translates into the lack of additive separability in
econometric modeling. We show how a new methodology called \emph{logical
differencing} can be leveraged to cancel out the two-way fixed effects,
which correspond to unobserved individual heterogeneity, without relying
on arithmetic additivity. The key idea is to exploit the logical implication
of weak multivariate monotonicity and use the intersection of mutually
exclusive events on the unobserved fixed effects. It would be interesting
to explore whether and how the idea of \emph{logical differencing},
or more generally the use of fundamental logical operations, can be
applied to other econometric settings.

The identified sets derived by our method under various support restrictions
demonstrate that the proposed method is able to achieve accurate identified
set despite of discreteness and boundedness in $Supp\left(X\right)$.
Simulation results show that our method performs reasonably well with
a relatively small sample size, and robust to various configurations.
The empirical illustration using the real network data of Nyakatoke
reveals that our method is able to capture the essence of the network
formation process by generating estimates that conform well with economic
intuition.

This paper also reveals several further research questions regarding
dyadic network formation models under the NTU setting. First, given
the observation that the NTU setting can capture ``homophily effects''
with respect to the unobserved heterogeneity (under log-concave error
distributions) while imposing monotonicity in the unobserved heterogeneity,
it is interesting to investigate whether one can differentiate homophily
effects generated by ``intrinsic preference'' from assortativity
effects generated by bilateral consent, NTU and log-concave errors.
Second, admittedly the identifying restriction obtained in this paper
becomes uninformative when we have \emph{antisymmetric} pairwise observable
characteristics. However, preliminary analysis based on an adaption
of \citet{gao2018nonparametric} to the NTU setting suggests that
individual unobserved heterogeneity can be nonparametrically identified
up to location and inter-quantile range normalizations. After the
identification of individual unobserved heterogeneity terms ($A_{i}$),
it becomes straightforward to identify the index parameter $\b_{0}$
based on the observable characteristics, even in the presence of antisymmetric
pairwise characteristics. However, consistent estimators of $A_{i}$
and $\b_{0}$ in a semiparametric framework based on the identification
strategy in \citet{gao2018nonparametric} are still being developed.
We thus leave these research questions to future work.

\paragraph{Acknowledgements}

Wayne Gao is grateful to Xiaohong Chen and Peter C. B. Phillips for
their advice and support while working on this paper during his PhD
study at Yale University. Ming Li is grateful to Donald W. K. Andrews
and Yuichi Kitamura for their advice and support while working on
this paper during his PhD study at Yale University. We also thank
Isaiah Andrews, Yann Bramoull\'{e}, Benjamin Connault, and Paul Goldsmith-Pinkham,
as well as seminar and conference participants at Yale University,
University of Arizona, University of Essex, the International Conference
for Game Theory at Stony Brook University (2019), the Young Economist
Symposium at Columbia University (2019), and the Latin American Meeting
of the Econometric Society at Benem\'{e}rita Universidad Aut\'{o}noma
de Puebla (2019) for their comments. All errors are our own.

\bibliographystyle{ecta}
\bibliography{NetForm}

\newpage{}

\appendix

\section*{Appendix}

\section{Proofs\label{sec:Proofs}}

\subsection{\label{subsec:Proof-of-Lemma-2}Proof of Lemma \ref{lem:DNF2}}
\begin{proof}
For notational simplicity, we denote $\D\left(x;x_{i},x_{j}\right)$
to be $w\left(x_{i},x\right)-w\left(x_{j},x\right)$ and write

\begin{equation}
\l\left(\ol x,\ul x;x_{i},x_{j};\b\right)=\ind\left\{ \D\left(\ol x;x_{i},x_{j}\right)^{'}\b\leq0\right\} \ind\left\{ \D\left(\ul x;x_{i},x_{j}\right)^{'}\b\geq0\right\} .\label{eq:def lambda_ij}
\end{equation}
Therefore, the event \eqref{eq:DNF_ID2} is equivalent to $\left\{ \D\left(\ol x;x_{i},x_{j}\right)^{'}\b_{0}>0\right\} \cup\left\{ \D\left(\ul x;x_{i},x_{j}\right)^{'}\b_{0}<0\right\} $
and the event \eqref{eq:DNF_ID3} is equivalent to $\left\{ \D\left(\ol x;x_{i},x_{j}\right)^{'}\b\leq0\right\} \cap\left\{ \D\left(\ul x;x_{i},x_{j}\right)^{'}\b\geq0\right\} $. 

By Assumption \ref{assu:DNF-supportX}, there exist $x_{i}$ and $x_{j}$
in $Supp\left(X_{i}\right)$ such that $\D\left(X_{k};x_{i},x_{j}\right)$
contains ${\bf 0}$ as an interior point, or in other words, contains
all directions from the origin. Hence, given any directions $\b_{0}$
and $\b\neq\b_{0}$ in $\S^{d-1}$, there must exist some $\ol x\in Supp\left(X_{i}\right)$
such that
\begin{equation}
\D\left(\ol x;x_{i},x_{j}\right){}^{'}\b_{0}>0\text{ and }\D\left(\ol x;x_{i},x_{j}\right){}^{'}\text{\ensuremath{\b}}<0,\label{eq:ineq2_u}
\end{equation}
and some $\ul x\in Supp\left(X_{i}\right)$ such that
\begin{equation}
\D\left(\ul x;x_{i},x_{j}\right){}^{'}\b_{0}<0\text{ and }\D\left(\ul x;x_{i},x_{j}\right){}^{'}\text{\ensuremath{\b}}>0.\label{eq:ineq4_l}
\end{equation}
Since all the inequalities above are strict and $w$ is continuous
(Assumption \ref{assu:DNF-con}), there exists some $\e>0$ such that,
for every 
\[
\left(\tilde{x}_{i},\tilde{x}_{j},\tilde{\ol x},\tilde{\ul x}\right)\in\Upsilon:=B_{\e}\left(x_{i}\right)\times B_{\e}\left(x_{j}\right)\times B_{\e}\left(\ol x\right)\times B_{\e}\left(\ul x\right),
\]
with $B_{\e}\left(x\right)$ denoting the open ball of radius $\e$
around $x$, we have
\[
\D\left(\tilde{\ol x};\tilde{x}_{i},\tilde{x}_{j}\right){}^{'}\b_{0}>0\text{ and }\D\left(\tilde{\ol x};\tilde{x}_{i},\tilde{x}_{j}\right){}^{'}\text{\ensuremath{\b}}<0,
\]
and
\[
\D\left(\tilde{\ul x};\tilde{x}_{i},\tilde{x}_{j}\right){}^{'}\b_{0}<0\text{ and }\D\left(\tilde{\ul x};\tilde{x}_{i},\tilde{x}_{j}\right){}^{'}\text{\ensuremath{\b}}>0,
\]
which imply \eqref{eq:DNF_ID2} and \eqref{eq:DNF_ID3}. Since $x_{i},x_{j},\ol x,\ul x$
all belong to $Supp\left(X_{i}\right)$, we have
\begin{equation}
\P\left(\left(X_{i},X_{j},X_{k},X_{l}\right)\in\Upsilon\right)>0,\label{eq:tetrad_posprob}
\end{equation}
when we randomly sample individuals $\left(i,j,k,l\right)$.

Now, fix any $\ol a\in Supp\left(A_{i}\right)$ and any $\left(\tilde{\ol x},\tilde{\ul x},\tilde{x}_{i},\tilde{x}_{j}\right)\in\Upsilon$.
Since $\phi$ is strictly increasing in its first argument and continuous
in all its arguments, the function $\psi_{\tilde{\ol x}}$
\[
\psi_{\tilde{\ol x}}\left(w\left(\tilde{x}_{i},\tilde{\ol x}\right)^{'}\b_{0},a_{i}\right)=\int\phi\left(w\left(\tilde{x}_{i},\tilde{\ol x}\right)^{'}\b_{0},a_{i},A_{k}\right)d\P\left(\rest{A_{k}}X_{k}=\tilde{\ol x}\right)
\]
is also strictly increasing in its first argument and continuous in
all its arguments. Hence, 
\begin{align*}
 & \psi_{\tilde{\ol x}}\left(w\left(\tilde{x}_{i},\tilde{\ol x}\right)^{'}\b_{0},\ol a\right)-\psi_{\tilde{\ol x}}\left(w\left(\tilde{x}_{j},\tilde{\ol x}\right)^{'}\b_{0},\ol a\right)>0,
\end{align*}
and similarly 
\begin{align*}
 & \psi_{\tilde{\ul x}}\left(w\left(\tilde{x}_{i},\tilde{\ul x}\right)^{'}\b_{0},\ol a\right)-\psi_{\tilde{\ul x}}\left(w\left(\tilde{x}_{j},\tilde{\ul x}\right)^{'}\b_{0},\ol a\right)<0.
\end{align*}
Furthermore, there must exist some $\e>0$ such that, for any $a_{i},a_{j}\in\left[\ol a-\e,\ol a+\e\right]$,
and any individuals $i,j$ with $\left(X_{i},A_{i}\right)=\left(\tilde{x}_{i},a_{i}\right)$
and $\left(X_{j},A_{j}\right)=$$\left(\tilde{x}_{j},a_{j}\right)$,
we have
\[
\rho_{i}\left(\tilde{\ol x}\right)-\rho_{j}\left(\tilde{\ol x}\right)=\psi_{\tilde{\ol x}}\left(w\left(\tilde{x}_{i},\tilde{\ol x}\right)^{'}\b_{0},a_{i}\right)-\psi_{\tilde{\ol x}}\left(w\left(\tilde{x}_{j},\tilde{\ol x}\right)^{'}\b_{0},a_{j}\right)>0
\]
and
\[
\rho_{i}\left(\tilde{\ul x}\right)-\rho_{j}\left(\tilde{\ul x}\right)=\psi_{\tilde{\ul x}}\left(w\left(\tilde{x}_{i},\tilde{\ul x}\right)^{'}\b_{0},a_{i}\right)-\psi_{\tilde{\ul x}}\left(w\left(\tilde{x}_{j},\tilde{\ul x}\right)^{'}\b_{0},a_{j}\right)<0,
\]
which implies \eqref{eq:DNF_ID1}. Since
\[
\ol a\in Supp\left(A_{i}\right)=Supp\left(\rest{A_{i}}X_{i}=\tilde{x}_{i}\right)=Supp\left(\rest{A_{j}}X_{j}=\tilde{x}_{j}\right)
\]
by Assumption \ref{assu:DNF-supportA}, we have
\begin{align}
 & \P\left(\rest{\tau_{ij}\left(\tilde{\ol x},\tilde{\ul x}\right)=1}X_{i}=\tilde{x}_{i},X_{j}=\tilde{x}_{j}\right)\nonumber \\
>\  & \P\left(\rest{A_{i},A_{j}\in\left[\ol a-\e,\ol a+\e\right]}X_{i}=\tilde{x}_{i},X_{j}=\tilde{x}_{j}\right)>0.\label{eq:ijA_posprob}
\end{align}

Now, combining \eqref{eq:tetrad_posprob} and \eqref{eq:ijA_posprob},
we have

\begin{align*}
 & \P\left\{ \eqref{eq:DNF_ID1},\eqref{eq:DNF_ID2}\text{ and }\eqref{eq:DNF_ID3}\text{ hold}\right\} \\
\geq\  & \int_{\Upsilon}\P\left(\rest{\tau_{ij}\left(X_{k},X_{l}\right)=1}X_{i}=\tilde{x}_{i},X_{j}=\tilde{x}_{j},X_{k}=\tilde{\ol x},X_{l}=\tilde{\ul x}\right)d\P\left(x_{i},x_{j},\ol x,\ul x\right)>0
\end{align*}
since the integrand is strictly positive on the set $\Upsilon$, which
has strictly positive probability measure under $\P$.
\end{proof}

\subsection{\label{subsec:Proof-of-Theorem 1} Proof of Theorem \ref{thm:DNF-ID}}
\begin{proof}
By Lemma \ref{lem:DNF1}, we have $\b_{0}\in\arg\min_{\b\in\text{\ensuremath{\mathbb{S}}}^{d-1}}Q(\b)$
because $Q\left(\b_{0}\right)=0\leq Q\left(\b\right)$ by the construction
of the population criterion $Q\left(\cd\right)$. Furthermore, we
have $\b_{0}$ is the unique minimizer of $Q(\b)$ because for any
$\b\neq\b_{0}$, we have
\begin{align}
Q\left(\b\right) & =\E\left[\l\left(X_{k},X_{l};X_{i},X_{j};\b\right)\tau_{ij}\left(X_{k},X_{l}\right)\right]\nonumber \\
 & \geq\P\left\{ \eqref{eq:DNF_ID1},\eqref{eq:DNF_ID2}\text{ and }\eqref{eq:DNF_ID3}\text{ hold}\right\} >0,\label{eq:unique minimizer}
\end{align}
by Lemma \ref{lem:DNF2}.

Now, for any $\b\in\S^{d-1}$, the function $g_{ij}\left(z,\b\right):=\l\left(\ol x,\ul x;x_{i},x_{j};\b\right)\tau_{ij}\left(\ol x,\ul x\right)$
where $z:=\left(\ol x,\ul x;x_{i},x_{j}\right)$ is discontinuous
in $\b$ only on the (finite union of) hyperplanes defined by $\D(\ol x;x_{i},x_{j})^{'}\b=0$
or $\D(\ul x;x_{i},x_{j})^{'}\b=0$, which are probability zero events
under Assumption \ref{assu:DNF-supportX}. Moreover, 
\begin{equation}
\E\sup_{\b\in\S^{d-1}}\abs{g_{ij}\left(z,\b\right)}\leq1<\infty.\label{eq:part (ii) newey}
\end{equation}
Hence, by \citet*{newey1994asymp}, $Q\left(\b\right)$ is continuous,
and, given that $\S^{d-1}$ is compact, the desired result in Theorem
\ref{thm:DNF-ID} follows.
\end{proof}

\subsection{\label{subsec:Proof-of-Lemma 3}Lemma \ref{lem:DNF-UC} with Proof}

We state and prove the following lemma that we use to prove Theorem
\ref{thm:DNF-consistent}.
\begin{lem}[Uniform Convergence of $\widehat{Q}_{n}^{\g}\left(\b\right)$]
\label{lem:DNF-UC}Under model \eqref{eq:Model_NF} and Assumptions
\ref{as1prime}, \ref{assu:DNF-rs}--\ref{assu:as_gamma}, we have
\[
\sup_{\b\in\text{\ensuremath{\mathbb{S}}}^{d-1}}\abs{\widehat{Q}_{n}^{\g}\left(\b\right)-Q^{\g}\left(\b\right)}\pto0.
\]
\end{lem}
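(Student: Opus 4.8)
The plan is to introduce an \emph{infeasible} criterion that replaces the first-stage estimates $\widehat{\rho}_i$ by the true $\rho_i$, namely
\[
\widetilde{Q}_{n}^{\g}\left(\b\right):=\frac{\left(n-4\right)!}{n!}\sum_{i,j,k,l}\g\left(\rho_{i}\left(X_{k}\right)-\rho_{j}\left(X_{k}\right)\right)\g\left(\rho_{j}\left(X_{l}\right)-\rho_{i}\left(X_{l}\right)\right)\l\left(X_{k},X_{l};X_{i},X_{j};\b\right),
\]
and then to bound $\sup_{\b}\abs{\widehat{Q}_{n}^{\g}\left(\b\right)-Q^{\g}\left(\b\right)}$ by $\sup_{\b}\abs{\widehat{Q}_{n}^{\g}\left(\b\right)-\widetilde{Q}_{n}^{\g}\left(\b\right)}+\sup_{\b}\abs{\widetilde{Q}_{n}^{\g}\left(\b\right)-Q^{\g}\left(\b\right)}$, showing each term is $o_{p}\left(1\right)$. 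Note that $\widetilde{Q}_{n}^{\g}$ is a fourth-order U-statistic over the i.i.d. units $\left(X_{i},A_{i}\right)$ whose mean is exactly $Q^{\g}$.

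For the second term I would invoke a uniform law of large numbers for the U-process $\left\{ \widetilde{Q}_{n}^{\g}\left(\b\right):\b\in\S^{d-1}\right\} $. With the true $\rho_{i}$'s plugged in, the summand is a fixed bounded function (the arguments of $\g$ lie in $\left[-1,1\right]$, and $\g$ is Lipschitz with $\g\left(0\right)=0$, hence bounded there) multiplied by $\l\left(X_{k},X_{l};X_{i},X_{j};\b\right)$, which by \eqref{eq:def lambda_ij} is the product of the two half-space indicators $\ind\left\{ \D\left(X_{k};X_{i},X_{j}\right)^{'}\b\leq0\right\} $ and $\ind\left\{ \D\left(X_{l};X_{i},X_{j}\right)^{'}\b\geq0\right\} $. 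The collection of such products over $\b\in\S^{d-1}$ is a VC/Euclidean class with a bounded (hence integrable) envelope, and multiplying by the fixed bounded factor preserves this. Uniform convergence of the U-process to its mean $Q^{\g}$ then follows from a standard Glivenko--Cantelli-type uniform law for U-processes over Euclidean classes.

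For the first term I would exploit that $\l\in\left[0,1\right]$ for every $\b$, so the index $\b$ drops out of the bound. Writing $\widehat{a}\,\widehat{b}-ab=\widehat{a}\left(\widehat{b}-b\right)+\left(\widehat{a}-a\right)b$ and using Assumption \ref{assu:as_gamma} ($\g$ is $L$-Lipschitz and bounded on the relevant range), I obtain the uniform bound
\[
\sup_{\b}\abs{\widehat{Q}_{n}^{\g}\left(\b\right)-\widetilde{Q}_{n}^{\g}\left(\b\right)}\ \leq\ C\cdot\frac{\left(n-4\right)!}{n!}\sum_{i,j,k,l}\left(\abs{\widehat{\rho}_{i}\left(X_{k}\right)-\rho_{i}\left(X_{k}\right)}+\abs{\widehat{\rho}_{j}\left(X_{k}\right)-\rho_{j}\left(X_{k}\right)}+\abs{\widehat{\rho}_{i}\left(X_{l}\right)-\rho_{i}\left(X_{l}\right)}+\abs{\widehat{\rho}_{j}\left(X_{l}\right)-\rho_{j}\left(X_{l}\right)}\right)
\]
for a constant $C$. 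Collapsing the redundant summations (each term depends only on one individual--evaluation pair), it then suffices to show $\frac{1}{n\left(n-1\right)}\sum_{i\neq k}\abs{\widehat{\rho}_{i}\left(X_{k}\right)-\rho_{i}\left(X_{k}\right)}\pto0$. For each fixed $i$ the difference $\widehat{\rho}_{i}-\rho_{i}$ lies in the smooth class $\mathcal{C}_{2M}^{d_{x}+1}$ on the bounded support of $X$ (Assumption \ref{assu:uniform-reg}), which is Glivenko--Cantelli, so the empirical average over $k$ is within $o_{p}\left(1\right)$ of $\norm{\widehat{\rho}_{i}-\rho_{i}}_{L_{1}\left(\P_{X}\right)}\leq\norm{\widehat{\rho}_{i}-\rho_{i}}_{L_{2}\left(\P_{X}\right)}=o_{p}\left(1\right)$ by Lemma \ref{lem:est_1step}. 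Averaging over $i$ and applying Markov's inequality, together with the common distribution of the per-individual errors (Assumption \ref{assu:DNF-rs}) and dominated convergence (the errors are uniformly bounded), delivers the claim.

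The \textbf{main obstacle} is this first-stage replacement step, for two reasons. First, the evaluation points $X_{k}$ are themselves part of the data used to construct $\widehat{\rho}_{i}$, so $\widehat{\rho}_{i}$ and $X_{k}$ are not independent; I would address this by noting that the influence of any single pair on $\widehat{\rho}_{i}$ is $O\left(1/n\right)$, so it is absorbed into the $o_{p}\left(1\right)$ term (or handled via a leave-one-out version of the first stage). Second, convergence must be transmitted from each individual estimator $\widehat{\rho}_{i}$ to the average over all $i$, which requires the uniform-in-$i$ integrability supplied by random sampling and the uniform boundedness of $\rho$ and $\widehat{\rho}$. By comparison, the U-process uniform law in the second term is comparatively routine once the Euclidean property of the half-space class $\left\{ \l\left(\cd;\b\right):\b\in\S^{d-1}\right\} $ is recorded.
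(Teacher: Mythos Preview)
Your proposal is correct and follows essentially the same two-step approach as the paper: split via the infeasible criterion $\widetilde{Q}_{n}^{\g}$, control $\widehat{Q}_{n}^{\g}-\widetilde{Q}_{n}^{\g}$ using $\l\in[0,1]$, the Lipschitz property of $\g$, and the $L_{2}$-consistency of $\hat{\rho}_{i}$ from Lemma~\ref{lem:est_1step} together with Markov's inequality, and control $\widetilde{Q}_{n}^{\g}-Q^{\g}$ by a Glivenko--Cantelli/U-process uniform law over the VC class generated by the half-space indicators in $\l$. Your explicit flagging of the dependence between $\hat{\rho}_{i}$ and the evaluation points $X_{k}$ is a valid caveat that the paper's own proof leaves implicit.
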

\begin{proof}
For notational simplicity, we suppress the superscript $\g$ in this
proof. Define the infeasible criterion $\widetilde{Q}_{n}\left(\b\right)$
as

\begin{equation}
\begin{aligned}\widetilde{Q}_{n}\left(\b\right):=\frac{\left(n-4\right)!}{n!}\sum_{1\leq i\neq j\neq k\neq l\leq n} & \tau_{ij}^{\g}\left(X_{k},X_{l}\right)\l\left(X_{k},X_{l};X_{i},X_{j};\b\right).\end{aligned}
\label{eq:infeasible Q}
\end{equation}
By triangular inequality, we have

\begin{equation}
\sup_{\b\in\text{\ensuremath{\mathbb{S}}}^{d-1}}\abs{\widehat{Q}_{n}\left(\b\right)-Q\left(\b\right)}\leq\sup_{\b\in\text{\ensuremath{\mathbb{S}}}^{d-1}}\abs{\widehat{Q}_{n}\left(\b\right)-\widetilde{Q}\left(\b\right)}+\sup_{\b\in\text{\ensuremath{\mathbb{S}}}^{d-1}}\abs{\widetilde{Q}\left(\b\right)-Q\left(\b\right)}.\label{eq:triangle ienquality}
\end{equation}

\medskip{}

We first show that $\sup_{\b\in\text{\ensuremath{\mathbb{S}}}^{d-1}}\abs{\widehat{Q}_{n}\left(\b\right)-\widetilde{Q}\left(\b\right)}=o_{p}\left(1\right)$.
Since $\l$ only takes value in $\left\{ 0,1\right\} $, we have
\begin{equation}
\begin{aligned} & \sup_{\b\in\text{\ensuremath{\mathbb{S}}}^{d-1}}\abs{\widehat{Q}_{n}\left(\b\right)-\widetilde{Q}\left(\b\right)}\\
= & \ \frac{\left(n-4\right)!}{n!}\sum_{1\leq i\neq j\neq k\neq l\leq n}\sup_{\b\in\text{\ensuremath{\mathbb{S}}}^{d-1}}\abs{\l\left(X_{k},X_{l};X_{i},X_{j};\b\right)}\\
 & \ \times\abs{\g\left(\hat{\rho}_{i}(X_{k})-\hat{\rho}_{j}(X_{k})\right)\cdot\g\left(\hat{\rho}_{j}(X_{l})-\hat{\rho}_{i}(X_{l})\right)-\g\left(\rho_{i}(X_{k})-\rho_{j}(X_{k})\right)\cdot\g\left(\rho_{j}(X_{l})-\rho_{i}(X_{l})\right)}\\
\leq & \ \frac{\left(n-4\right)!}{n!}\sum_{1\leq i\neq j\neq k\neq l\leq n}\abs{\begin{array}{c}
\g\left(\hat{\rho}_{i}(X_{k})-\hat{\rho}_{j}(X_{k})\right)\cdot\g\left(\hat{\rho}_{j}(X_{l})-\hat{\rho}_{i}(X_{l})\right)\\
-\g\left(\rho_{i}(X_{k})-\rho_{j}(X_{k})\right)\cdot\g\left(\rho_{j}(X_{l})-\rho_{i}(X_{l})\right)
\end{array}}\\
\leq & \ \frac{\left(n-4\right)!}{n!}\sum_{1\leq i\neq j\neq k\neq l\leq n}\left(\begin{array}{c}
\left|\g\left(\hat{\rho}_{i}(X_{k})-\hat{\rho}_{j}(X_{k})\right)-\g\left(\rho_{i}(X_{k})-\rho_{j}(X_{k})\right)\right|\cd\g\left(\rho_{j}(X_{l})-\rho_{i}(X_{l})\right)\\
+\g\left(\rho_{i}(X_{k})-\rho_{j}(X_{k})\right)\cdot\left|\g\left(\hat{\rho}_{j}(X_{l})-\hat{\rho}_{i}(X_{l})\right)-\g\left(\rho_{j}(X_{l})-\rho_{i}(X_{l})\right)\right|\\
+\left|\g\left(\hat{\rho}_{i}(X_{k})-\hat{\rho}_{j}(X_{k})\right)-\g\left(\rho_{i}(X_{k})-\rho_{j}(X_{k})\right)\right|\\
\cd\left|\g\left(\hat{\rho}_{j}(X_{l})-\hat{\rho}_{i}(X_{l})\right)-\g\left(\rho_{j}(X_{l})-\rho_{i}(X_{l})\right)\right|
\end{array}\right)\\
\leq & \ C\cd\frac{\left(n-4\right)!}{n!}\sum_{1\leq i\neq j\neq k\neq l\leq n}\left(\begin{array}{c}
\left(\left|\hat{\rho}_{i}(X_{k})-\rho_{i}(X_{k})\right|+\left|\hat{\rho}_{j}(X_{k})-\rho_{j}(X_{k})\right|\right)\cd\left|\rho_{j}(X_{l})-\rho_{i}(X_{l})\right|\\
+\left(\left|\hat{\rho}_{i}(X_{l})+\rho_{i}(X_{l})\right|+\left|\hat{\rho}_{j}(X_{l})-\rho_{j}(X_{l})\right|\right)\cd\left|\rho_{i}(X_{k})-\rho_{j}(X_{k})\right|\\
+\left(\left|\hat{\rho}_{i}(X_{k})-\rho_{i}(X_{k})\right|+\left|\hat{\rho}_{j}(X_{k})-\rho_{j}(X_{k})\right|\right)\\
\cd\left(\left|\hat{\rho}_{i}(X_{l})-\rho_{i}(X_{l})\right|+\left|\hat{\rho}_{j}(X_{l})-\rho_{j}(X_{l})\right|\right)
\end{array}\right),
\end{aligned}
\label{eq:splitdiff}
\end{equation}
for some $C>0$, where the last inequality holds due to Lipschitz-continuity
of $\g$ in Assumption \ref{assu:as_gamma}.

Recall that $\rho_{i}$ as a function is identified by $\left(x_{i},a_{i}\right)$
through
\[
\rho_{i}\left(x\right)=\int\phi\left(w\left(x,x_{i}\right)^{'}\b_{0},a_{i},A_{k}\right)\P\left(\rest{A_{k}}X_{k}=x\right).
\]
We write the $L_{2}\left(\P_{X}\right)$ norm of $\rho_{i}$ to mean
\[
\norm{\rho_{i}}:=\sqrt{\int\rho_{i}^{2}\left(x\right)d\P\left(X_{k}=x\right)}
\]
and use the subscripts of $\E$ to denote expectation over variables
indexed by those subscripts, e.g., 
\[
\E_{i}\norm{\rho_{i}}:=\int\norm{\rho_{i}}d\P\left(X_{i}=x_{i},A_{i}=a_{i}\right).
\]
Then, by \eqref{eq:splitdiff}, we have
\begin{align*}
 & \ \E\sup_{\b\in\text{\ensuremath{\mathbb{S}}}^{d-1}}\abs{\widehat{Q}_{n}\left(\b\right)-\widetilde{Q}\left(\b\right)}\cd\frac{1}{C}\\
\leq & \ \E_{ijkl}\left[\left(\left|\hat{\rho}_{i}(X_{k})-\rho_{i}(X_{k})\right|+\left|\hat{\rho}_{j}(X_{k})-\rho_{j}(X_{k})\right|\right)\cd\left|\rho_{j}(X_{l})-\rho_{i}(X_{l})\right|\right]\\
 & \ +\E_{ijkl}\left[\left(\left|\hat{\rho}_{i}(X_{l})-\rho_{i}(X_{l})\right|+\left|\hat{\rho}_{j}(X_{l})-\rho_{j}(X_{l})\right|\right)\cd\left|\rho_{i}(X_{k})-\rho_{j}(X_{k})\right|\right]\\
 & \ +\E_{ijkl}\left[\left(\left|\hat{\rho}_{i}(X_{k})-\rho_{i}(X_{k})\right|+\left|\hat{\rho}_{j}(X_{k})-\rho_{j}(X_{k})\right|\right)\left(\left|\hat{\rho}_{i}(X_{l})+\rho_{i}(X_{l})\right|+\left|\hat{\rho}_{j}(X_{l})-\rho_{j}(X_{l})\right|\right)\right]\\
= & \ 2\E_{ijkl}\left|\hat{\rho}_{i}(X_{k})-\rho_{i}(X_{k})\right|\left|\rho_{j}(X_{l})-\rho_{i}(X_{l})\right|+2\E_{ijkl}\left|\hat{\rho}_{j}(X_{k})-\rho_{j}(X_{k})\right|\left|\rho_{j}(X_{l})-\rho_{i}(X_{l})\right|\\
 & \ +\E_{ik}\left|\hat{\rho}_{i}(X_{k})-\rho_{i}(X_{k})\right|^{2}+\E_{jk}\left|\hat{\rho}_{j}(X_{k})-\rho_{j}(X_{k})\right|^{2}+\E_{ik}\left|\hat{\rho}_{i}(X_{k})-\rho_{i}(X_{k})\right|\E_{jl}\left|\hat{\rho}_{j}(X_{l})-\rho_{j}(X_{l})\right|\\
\leq & \ 2\E_{ik}\left|\hat{\rho}_{i}(X_{k})-\rho_{i}(X_{k})\right|+2\E_{jk}\left|\hat{\rho}_{j}(X_{k})-\rho_{j}(X_{k})\right|\\
 & \ +\E_{ik}\left|\hat{\rho}_{i}(X_{k})-\rho_{i}(X_{k})\right|^{2}+\E_{jk}\left|\hat{\rho}_{j}(X_{k})-\rho_{j}(X_{k})\right|^{2}+\E_{ik}\left|\hat{\rho}_{i}(X_{k})-\rho_{i}(X_{k})\right|\E_{jl}\left|\hat{\rho}_{j}(X_{l})-\rho_{j}(X_{l})\right|\\
\leq & \ 4\E_{i}\norm{\hat{\rho}_{i}-\rho_{i}}+2\E_{i}\norm{\hat{\rho}_{i}-\rho_{i}}^{2}+\E_{i}\norm{\hat{\rho}_{i}-\rho_{i}}\cd\E_{j}\norm{\hat{\rho}_{j}-\rho_{j}}\\
= & \ 4\E_{i}\left[o_{p}\left(1\right)\right]+2\E_{i}\left[\left(o_{p}\left(1\right)\right)^{2}\right]+\left(\E_{i}\left[o_{p}\left(1\right)\right]\right)^{2}\\
= & \ o_{p}\left(1\right)
\end{align*}
where the second last inequality follows from the observation that
$\left|\rho_{j}-\rho_{i}\right|\leq1$, while the last inequality
follows from the Cauchy--Schwarz inequality, and the second last
equality follows from $\norm{\hat{\rho}_{i}-\rho_{i}}=o_{p}\left(1\right)$
by the $L_{2}\left(\P_{X}\right)$-consistency of $\hat{\rho}_{i}$
for each $i$. Finally, by the Markov inequality, we have
\begin{equation}
\sup_{\b\in\text{\ensuremath{\mathbb{S}}}^{d-1}}\abs{\widehat{Q}_{n}\left(\b\right)-\widetilde{Q}\left(\b\right)}=o_{p}\left(1\right).\label{eq: first step result}
\end{equation}
\medskip{}

Next, we show 
\[
\sup_{\b\in\text{\ensuremath{\mathbb{S}}}^{d-1}}\abs{\widetilde{Q}_{n}\left(\b\right)-Q\left(\b\right)}=o_{p}\left(1\right).
\]
Clearly, $\left\{ \widetilde{Q}_{n}\left(\b\right)-Q\left(\b\right):\b\in\mathbb{S}^{d-1}\right\} $
is a centered U-process of order $4$, and we apply the results in
\citet{arcones1993limit} for U-statistic empirical processes.

To start, we know that the collection of halfspaces $x^{'}\b\geq0$
across $\b\in\S^{d-1}$ is a VC class of functions with VC dimension
$d+2$, by Problem 14 in Section 6 of \citet*[VW thereafter]{van1996weak}.
Furthermore, the intersection of VC classes remains VC by Lemma 2.6.17(ii)
of VW. Hence, $\left\{ \l\left(\cd,\b\right):\b\in\S^{d-1}\right\} $
is a VC-subgraph class of functions. Since $\left\{ \l\left(\cd,\b\right):\b\in\S^{d-1}\right\} $
has a constant envelope function 1, which trivially has finite expectation.
By Corollary 3.3 of \citet*{arcones1993limit}, $\left\{ \l\left(\cd,\b\right):\b\in\S^{d-1}\right\} $
is a Glivenko-Cantelli class,

Next, by VW Corollary 2.7.2 that the bracketing number 
\[
\mathscr{N}_{[]}\left(\e,{\cal C}_{M}^{\left\lfloor d/2\right\rfloor +1}\left(Supp\left(X_{i}\right)\right),\norm{\cd}_{1,\P}\right)<\infty
\]
 for every $\e>0$. Hence, by Corollary 3.3 of \citet*{arcones1993limit},
we know that ${\cal C}_{M}^{\left\lfloor d/2\right\rfloor +1}\left(Supp\left(X_{i}\right)\right)$
is also Glivenko-Cantelli.

Now, since the mapping
\[
\left(\rho_{i},\rho_{j},\l\right)\mapsto\g\left(\rho_{i}-\rho_{j}\right)\g\left(\rho_{j}-\rho_{i}\right)\l\left(\cd,\b\right)
\]
 is continuous in all its arguments, by Theorem 3 of \citet*{van2000preservation},
we know that 
\[
\left\{ \g\left(\rho_{i}-\rho_{j}\right)\g\left(\rho_{j}-\rho_{i}\right)\l\left(\cd,\b\right):\rho_{i},\rho_{j}\in{\cal C}_{M}^{\left\lfloor d/2\right\rfloor +1}\left(Supp\left(X_{i}\right)\right),\b\in\S^{d-1}\right\} 
\]
is also Glivenko-Cantelli, i.e.,
\begin{equation}
\sup_{\b\in\text{\ensuremath{\mathbb{S}}}^{d-1}}\abs{\widetilde{Q}_{n}\left(\b\right)-Q\left(\b\right)}\pto0.\label{eq:UC consistency}
\end{equation}

Combining \eqref{eq: first step result} and \eqref{eq:UC consistency},
we have 
\begin{equation}
\sup_{\b\in\text{\ensuremath{\mathbb{S}}}^{d-1}}\abs{\widehat{Q}_{n}\left(\b\right)-Q\left(\b\right)}\pto0.\label{eq:UC}
\end{equation}
\end{proof}

\subsection{\label{subsec:Proof-of-Theorem2}Proof of Theorem \ref{thm:DNF-consistent}}
\begin{proof}
By the standard theory of M-estimation, say, Theorem 5.7 of \citet*{van2000asymptotic},
the conclusion of Theorem \ref{thm:DNF-consistent} follows from Theorem
\ref{thm:DNF-ID} (clean point identification) and Lemma \ref{lem:DNF-UC}
(uniform convergence of sample criterion).
\end{proof}

\section{\label{subsec:PID_Discrete}Point Identification with a Special Covariate}

\noindent \textbf{Assumption 4$''$}\customlabel{as4prime2}{$4''$}\emph{Suppose
that:}
\begin{itemize}
\item[\emph{(i)}] \emph{ $w_{h}\left(\ol x,\ul x\right):=\left|\ol x_{h}-\ul x_{h}\right|$
for every coordinate $h$;}
\item[\emph{(ii)}] \emph{ $\b_{0,1}\neq0$.}
\item[\emph{(iii)}] \emph{ the support of $X_{i,1}$} given all other coordinates $X_{i,-1}$
is the whole real line $\R$.
\item[\emph{(iv)}] \emph{ there exist two distinct values $\ol x_{h},\ul x_{h}\in\R$
for each coordinate $h\in\left\{ 2,...,d\right\} $ such that $\vartimes_{h=2}^{d}\left\{ \ol x_{h},\ul x_{h}\right\} \subseteq Supp\left(X_{i,-1}\right).$}
\end{itemize}
Assumption\emph{ \ref{as4prime2}} is very similar to the corresponding
set of assumptions imposed for maximum-score estimators a la \citet{horowitz1992smoothed},
with the exception that Assumption\emph{ \ref{as4prime2}}(iv) is
stronger than the corresponding condition in \citet{horowitz1992smoothed},
which only requires that the support of $X_{i}$ is not contained
in any proper linear subspace of $\R^{d}$. Nevertheless, we regard
Assumption\emph{ \ref{as4prime2}}(iv) as a very mild requirement:
it essentially requires that the conditional support of each coordinate
of $X_{i,-1}$ does not degenerate to a singleton, at least in some
part of $Supp\left(X_{i,-1}\right)$. Assumption\emph{ \ref{as4prime2}}(iv)
effectively enables us to vary one discrete coordinate while holding
the other coordinates in $X_{i,-1}$ fixed. Even though this condition
is stronger than necessary, it drastically simplifies the proof of
point identification below.

To see the intuition through a more concrete example, suppose $X_{i}$
is two-dimensional and $\b_{0}=\left(1,1\right)^{'}$. Then, to differentiate
$\b_{0}$ from $\b=\left(1,-1\right)^{'}$, we need to find in-support
$\left(X_{i},X_{j},X_{k},X_{l}\right)$ such that both
\begin{align*}
\text{sgn}\left\{ \left(W_{ik}-W_{jk}\right)^{'}\b_{0}\right\}  & \neq\text{sgn}\left\{ \left(W_{ik}-W_{jk}\right)^{'}\b\right\} \text{ and }\\
\text{sgn}\left\{ \left(W_{il}-W_{jl}\right)^{'}\b_{0}\right\}  & \neq\text{sgn}\left\{ \left(W_{il}-W_{jl}\right)^{'}\b\right\} 
\end{align*}
are satisfied in order to have \eqref{eq:DNF_ID1}, \eqref{eq:DNF_ID2}
and \eqref{eq:DNF_ID3} occur simultaneously with strictly positive
probability. But this is not possible if the second (discrete) dimension
of $X$ is the same for all individuals, since $\b_{0}^{\left(1\right)}=\b^{\left(1\right)}=1$
and the only way to flip the sign is to vary the second coordinate
of $X$ for each individual. The general argument will be made clearer
in the proof of Lemma \emph{\ref{lem2prime2}.}

Given Assumption \emph{\ref{as4prime2}}(ii), it is standard to proceed
with a different normalization from the unit sphere $\S^{d-1}$ in
the main text:
\[
\b_{0}\in{\cal B}:=\left\{ 1,-1\right\} \times\R^{d-1}.
\]

\noindent \textbf{Lemma 2$''$}\customlabel{lem2prime2}{$2''$} \emph{Under
model \eqref{eq:Model_NF}, Assumptions \ref{as1prime}, \ref{assu:DNF-rs},
\ref{assu:DNF-con}, \ref{as4prime2}, and \ref{assu:DNF-supportA},
for each $\b\in{\cal B}\backslash\left\{ \b_{0}\right\} $, \eqref{eq:DNF_ID1},
\eqref{eq:DNF_ID2} and \eqref{eq:DNF_ID3} occur simultaneously with
strictly positive probability.}
\begin{proof}
Take any $\b\in{\cal B}\backslash\left\{ \b_{0}\right\} $. We consider
three separate cases:

\emph{Case 1:} $\b_{1}=\b_{0,1}=1$.

In this case, there exists some coordinate $h\neq1$ such that $\b_{h}\neq\b_{0,h}$.
Set
\begin{align*}
\ul x_{-1} & :=\left(\ul x_{2},..,\ul x_{h}\right)\\
\hat{x}_{-1} & :=\ul x_{-1}+\left(\ol x_{h}-\ul x_{h}\right)e_{h}
\end{align*}
where $e_{h}$ denotes the elementary vector with 1 for coordinate
$h$ and $0$ elsewhere. By Assumption \emph{\ref{as4prime2}(iv),}
$\ul x_{-1},\hat{x}_{-1}\in Supp\left(X_{i,-1}\right)$. Moreover
\[
w_{-1}\left(\hat{x}_{-1},\ul x_{-1}\right)-w_{-1}\left(\ul x_{-1},\ul x_{-1}\right)=\left|\ol x_{h}-\ul x_{h}\right|e_{h}
\]
Now, writing
\begin{align*}
z & :=\left|\ol x_{h}-\ul x_{h}\right|\b_{0,h}\in\R,\\
\d & :=\left|\ol x_{h}-\ul x_{h}\right|\left(\b_{0,h}-\b_{h}\right)\neq0,
\end{align*}
we have
\begin{align*}
 & \left(w_{-1}\left(\hat{x}_{-1},\ul x_{-1}\right)-w_{-1}\left(\ul x_{-1},\ul x_{-1}\right)\right)^{'}\b_{-1}=\left|\ol x_{h}-\ul x_{h}\right|\b_{h}=z-\d\\
\neq\  & z=\left|\ol x_{h}-\ul x_{h}\right|\b_{0,h}=\left(w_{-1}\left(\hat{x}_{-1},\ul x_{-1}\right)-w_{-1}\left(\ul x_{-1},\ul x_{-1}\right)\right)^{'}\b_{0,-1}
\end{align*}
and
\begin{align*}
 & \left(w_{-1}\left(\hat{x}_{-1},\hat{x}_{-1}\right)-w_{-1}\left(\ul x_{-1},\hat{x}_{-1}\right)\right)^{'}\b_{-1}=-\left|\ol x_{h}-\ul x_{h}\right|\b_{h}=-z+\d\\
\neq\  & -z=-\left|\ol x_{h}-\ul x_{h}\right|\b_{0,h}=\left(w_{-1}\left(\hat{x}_{-1},\hat{x}_{-1}\right)-w_{-1}\left(\ul x_{-1},\hat{x}_{-1}\right)\right)^{'}\b_{0,-1}
\end{align*}

Now, by Assumption \emph{\ref{as4prime2}(iii),} we can set
\begin{align*}
\hat{x} & :=\left(\left[\frac{1}{2}\d-z\right]_{+},\ \hat{x}_{-1}\right)\in Supp\left(X_{i}\right),\\
\check{x} & :=\left(\left[z-\frac{1}{2}\d\right]_{+},\ \ul x_{-1}\right)\in Supp\left(X_{i}\right),\\
\tilde{x} & :=\left(0,\ \ul x_{-1}\right)\in Supp\left(X_{i}\right),\\
\mathring{x} & :=\left(\left[\frac{1}{2}\d-z\right]_{+}+\left[z-\frac{1}{2}\d\right]_{+},\ \hat{x}_{-1}\right)\in Supp\left(X_{i}\right).
\end{align*}
so that
\begin{align}
\D\left(\tilde{x};\hat{x},\check{x}\right){}^{'}\b_{0} & =\left(w\left(\hat{x},\tilde{x}\right)-w\left(\check{x},\tilde{x}\right)\right)^{'}\b_{0}\nonumber \\
 & =\left|\hat{x}_{1}\right|-\left|\check{x}_{1}\right|+\left|\ol x_{h}-\ul x_{h}\right|\b_{0,h}\nonumber \\
 & =\left(\frac{1}{2}\d-z\right)+z=\frac{1}{2}\d,\nonumber \\
\D\left(\tilde{x};\hat{x},\check{x}\right){}^{'}\b & =\left(w\left(\hat{x},\tilde{x}\right)-w\left(\check{x},\tilde{x}\right)\right)^{'}\b\nonumber \\
 & =\left(\left|\hat{x}_{1}\right|-\left|\check{x}_{1}\right|\right)+\left|\ol x_{h}-\ul x_{h}\right|\b_{h}\nonumber \\
 & =\left(\frac{1}{2}\d-z\right)+\left(z-\d\right)=-\frac{1}{2}\d,\label{eq:disc_ineq4}\\
\D\left(\mathring{x};\hat{x},\check{x}\right){}^{'}\b_{0} & =\left(w\left(\hat{x},\mathring{x}\right)-w\left(\check{x},\mathring{x}\right)\right)^{'}\b_{0}\nonumber \\
 & =\left(\left|\hat{x}_{1}-\mathring{x}_{1}\right|-\left|\check{x}_{1}-\mathring{x}_{1}\right|\right)-\left|\ol x_{h}-\ul x_{h}\right|\b_{0,h}\nonumber \\
 & =z-\frac{1}{2}\d-z=-\frac{1}{2}\d,\nonumber \\
\D\left(\mathring{x};\hat{x},\check{x}\right){}^{'}\b & =\left(w\left(\hat{x},\mathring{x}\right)-w\left(\check{x},\mathring{x}\right)\right)^{'}\b>0\nonumber \\
 & =\left(\left|\hat{x}_{1}-\mathring{x}_{1}\right|-\left|\check{x}_{1}-\mathring{x}_{1}\right|\right)-\left|\ol x_{h}-\ul x_{h}\right|\b_{h}\nonumber \\
 & =z-\frac{1}{2}\d-z+\d=\frac{1}{2}\d,\nonumber 
\end{align}
which exactly correspond to the inequalities \eqref{eq:ineq2_u} and
\eqref{eq:ineq4_l} in the proof of Lemma \ref{lem:DNF2}, both of
the form
\[
\text{sgn}\left(\D\left(\tilde{x};\hat{x},\check{x}\right){}^{'}\b_{0}\right)=\text{sgn}\left(\D\left(\mathring{x};\hat{x},\check{x}\right){}^{'}\b\right)\neq\text{sgn}\left(\D\left(\mathring{x};\hat{x},\check{x}\right){}^{'}\b_{0}\right)=\text{sgn}\left(\D\left(\tilde{x};\hat{x},\check{x}\right){}^{'}\b\right)
\]
Again, since all the inequalities in \eqref{eq:disc_ineq4} are strict,
they continue to hold for points in sufficiently small open balls
around $\hat{x},\check{x},\tilde{x},\mathring{x}$. And since $\hat{x},\check{x},\tilde{x},\mathring{x}$
are all taken from $Supp\left(X_{i}\right)$, any small open balls
around them have strictly positive probability measures. The rest
of the proof is exactly the same as in the proof of Lemma \ref{lem:DNF2}.\medskip{}

\emph{Case 2:} $\b_{1}=\b_{0,1}=-1$.

This case can be handled in the same way as in Case 1, with appropriate
changes of signs.\medskip{}

\emph{Case 3:} $\b_{1}\neq\b_{0,1}$.

In this case, we can take any point $\ul x_{-1}\in Supp\left(X_{i,-1}\right)$
and set
\begin{align*}
\hat{x} & :=\left(\left[\b_{0,1}\right]_{+},\ \ul x_{-1}\right)\in Supp\left(X_{i}\right),\\
\check{x} & :=\left(\left[-\b_{0,1}\right]_{+},\ \ul x_{-1}\right)\in Supp\left(X_{i}\right),\\
\tilde{x} & :=\left(0,\ \ul x_{-1}\right)\in Supp\left(X_{i}\right),\\
\mathring{x} & :=\left(\left[\b_{0,1}\right]_{+}+\left[-\b_{0,1}\right]_{+},\ \ul x_{-1}\right)\in Supp\left(X_{i}\right),
\end{align*}
so that
\begin{align*}
\D\left(\tilde{x};\hat{x},\check{x}\right){}^{'}\b_{0} & =\left(\left|\hat{x}_{1}\right|-\left|\check{x}_{1}\right|\right)\b_{0,1}=1>0,\\
\D\left(\tilde{x};\hat{x},\check{x}\right){}^{'}\b & =\left(\left|\hat{x}_{1}\right|-\left|\check{x}_{1}\right|\right)\b_{1}=-1<0,\\
\D\left(\mathring{x};\hat{x},\check{x}\right){}^{'}\b_{0} & =\left(\left|\hat{x}_{1}-\mathring{x}_{1}\right|-\left|\check{x}_{1}-\mathring{x}_{1}\right|\right)\b_{0,1}=-1<0,\\
\D\left(\mathring{x};\hat{x},\check{x}\right){}^{'}\b & =\left(\left|\hat{x}_{1}-\mathring{x}_{1}\right|-\left|\check{x}_{1}-\mathring{x}_{1}\right|\right)\b_{1}=1>0,
\end{align*}
which again exactly correspond to the inequalities \eqref{eq:ineq2_u}
and \eqref{eq:ineq4_l}. The rest of the proof is exactly the same
as in the proof of Lemma \ref{lem:DNF2}.
\end{proof}

\section{\label{subsec:ID_asym}Asymmetry of Pairwise Observable Characteristics}

So far we have been focusing on the case with symmetric pairwise observable
characteristics, i.e.,
\[
w\left(X_{i},X_{j}\right)\equiv w\left(X_{j},X_{i}\right).
\]
In this section, we briefly discuss how our method can be adapted
to accommodate asymmetric pairwise observable characteristics. 

As in Remark \ref{rem:AymW}, consider the adapted model \eqref{eq:Model_AsymW}:
\begin{equation}
\E\left[\rest{D_{ij}}X_{i},X_{j},A_{i},A_{j}\right]=\phi\left(w\left(X_{i},X_{j}\right)^{'}\b_{0},w\left(X_{j},X_{i}\right)^{'}\b_{0},A_{i},A_{j}\right)\label{eq:asym w}
\end{equation}
where $w$ needs not be symmetric with respect to its two vector arguments
and $\phi:\R^{4}\to\R$ is required to be monotone in all its four
arguments.

The technique of logical differencing still applies in the exactly
same way as before. Specifically, the event $\left\{ \rho_{\ol i}\left(\ol x\right)>\rho_{\ol j}\left(\ol x\right)\right\} $
implies that
\[
\left\{ w\left(X_{\ol i},\ol x\right)^{'}\b_{0}>w\left(X_{\ol j},\ol x\right)^{'}\b_{0}\right\} \text{ OR }\left\{ w\left(\ol x,X_{\ol i}\right)^{'}\b_{0}>w\left(\ol x,X_{\ol j}\right)^{'}\b_{0}\right\} \text{ OR }\left\{ A_{\ol i}>A_{\ol j}\right\} ,
\]
while the event $\left\{ \rho_{\ol i}\left(\ul x\right)<\rho_{\ol j}\left(\ul x\right)\right\} $
implies that
\[
\left\{ w\left(X_{\ol i},\ul x\right)^{'}\b_{0}<w\left(X_{\ol j},\ul x\right)^{'}\b_{0}\right\} \text{ OR }\left\{ w\left(\ul x,X_{\ol i}\right)^{'}\b_{0}<w\left(\ul x,X_{\ol j}\right)^{'}\b_{0}\right\} \text{ OR }\left\{ A_{\ol i}<A_{\ol j}\right\} .
\]
The joint occurrence of $\left\{ \rho_{\ol i}\left(\ol x\right)>\rho_{\ol j}\left(\ol x\right)\right\} $
and $\left\{ \rho_{\ol i}\left(\ul x\right)<\rho_{\ol j}\left(\ul x\right)\right\} $
now implies that
\begin{align}
 & \left\{ w\left(X_{\ol i},\ol x\right)^{'}\b_{0}>w\left(X_{\ol j},\ol x\right)^{'}\b_{0}\right\} \text{ OR }\left\{ w\left(\ol x,X_{\ol i}\right)^{'}\b_{0}>w\left(\ol x,X_{\ol j}\right)^{'}\b_{0}\right\} \nonumber \\
\text{ OR } & \left\{ w\left(X_{\ol i},\ul x\right)^{'}\b_{0}<w\left(X_{\ol j},\ul x\right)^{'}\b_{0}\right\} \text{ OR }\left\{ w\left(\ul x,X_{\ol i}\right)^{'}\b_{0}<w\left(\ul x,X_{\ol j}\right)^{'}\b_{0}\right\} ,\label{eq:ID_rest_asym}
\end{align}
which is in general ``less restrictive'' than the corresponding
identifying restriction in Lemma \ref{lem:DNF1}.

In particular, in the extreme case where $w$ is \emph{antisymmetric}
in the sense of 
\[
w\left(X_{i},X_{j}\right)\equiv-w\left(X_{j},X_{i}\right),
\]
the identifying restriction on the RHS of 
\begin{align*}
 & \left\{ w\left(X_{\ol i},\ol x\right)^{'}\b_{0}>w\left(X_{\ol j},\ol x\right)^{'}\b_{0}\right\} \text{ OR }\left\{ w\left(\ol x,X_{\ol i}\right)^{'}\b_{0}>w\left(\ol x,X_{\ol j}\right)^{'}\b_{0}\right\} 
\end{align*}
becomes 
\[
\left\{ w\left(X_{\ol i},\ol x\right)^{'}\b_{0}\neq w\left(X_{\ol j},\ol x\right)^{'}\b_{0}\right\} ,
\]
which can be generically true and thus becomes (almost) trivial.

Correspondingly, Assumption \ref{assu:DNF-supportX} needs to be strengthened
for point identification:
\begin{assumption*}[\textbf{3a}]
\label{asymmetric W case} There exist a pair of $\ol x,\ul x$,
both of which lie in the support of $Supp\left(X_{i}\right)$, such
that 
\[
Supp\left(w\left(\ol x,X_{i}\right)-w\left(\ul x,X_{i}\right)\right)\cap Supp\left(w\left(X_{i},\ol x\right)-w\left(X_{i},\ul x\right)\right)
\]
contains all directions in $\R^{d}$.
\end{assumption*}
Clearly, the case of antisymmetric $w$ is ruled out by Assumption
3a. Assumption \ref{asymmetric W case} ensures that, for any $\b\neq\b_{0}$,
there exist in-support $x_{i}$ and $x_{j}$ such that 
\begin{align}
 & \left\{ w\left(x_{i},X_{k}\right)^{'}\b_{0}>w\left(x_{j},X_{k}\right)^{'}\b_{0}\right\} \text{ AND }\left\{ w\left(x_{i},X_{l}\right)^{'}\b_{0}<w\left(x_{j},X_{l}\right)^{'}\b_{0}\right\} \nonumber \\
\text{AND} & \left\{ w\left(X_{k},x_{i}\right)^{'}\b_{0}>w\left(X_{k},x_{j}\right)^{'}\b_{0}\right\} \text{ AND }\left\{ w\left(X_{l},x_{i}\right)^{'}\b_{0}<w\left(X_{l},x_{j}\right)^{'}\b_{0}\right\} \label{eq:one direct asym}
\end{align}
and
\begin{align}
 & \left\{ w\left(x_{i},X_{k}\right)^{'}\b\leq w\left(x_{j},X_{k}\right)^{'}\b\right\} \text{ AND }\left\{ w\left(x_{i},X_{l}\right)^{'}\b\geq w\left(x_{j},X_{l}\right)^{'}\b\right\} \nonumber \\
\text{AND} & \left\{ w\left(X_{k},x_{i}\right)^{'}\b\leq w\left(X_{k},x_{j}\right)^{'}\b\right\} \text{ AND }\left\{ w\left(X_{l},x_{i}\right)^{'}\b\geq w\left(X_{l},x_{j}\right)^{'}\b\right\} \label{eq:other direct asym w}
\end{align}
occur simultaneously with strictly positive probability. \eqref{eq:one direct asym}
and \eqref{eq:other direct asym w} are sufficient for $\left\{ \rho_{\ol i}\left(\ol x\right)>\rho_{\ol j}\left(\ol x\right)\right\} $
and $\left\{ \rho_{\ol i}\left(\ul x\right)<\rho_{\ol j}\left(\ul x\right)\right\} $
to occur simultaneously under the maintained assumption on the support
of $A_{i}$. It thus can guarantee point identification of $\b_{0}$.

The estimator can be correspondingly adapted in an obvious manner.

\section{\label{subsec:Sim_Robustness}Additional Simulation Results}

\subsection{Results Varying $N$ and $d$}

In this section, we vary the number of individuals $N$ and $\b_{0}$'s
dimension $d$ to examine how robust our method is against these variations.
We investigate the performance when $N=50,100,200$ and $d=3,4$,
respectively. We maintain the symmetry in $w\left(\cd,\cd\right)$
and other distributional assumptions as in baseline setup. $M$, the
number of $\left(i,j\right)$ pairs used to evaluate objective function,
is set to be 1,000 in all simulations. Note that one could make $M$
larger for larger $N$ to better capture the more information available
from the increase in $N$. In this sense, our results are conservative
below. Results are summarized in Table \ref{tab:C5_Sim_varyND}.
\begin{center}
\begin{table}[h]
\caption{\label{tab:C5_Sim_varyND}Results Varying $N$ and $d$}

\bigskip{}

\noindent \centering{}{\footnotesize{}}%
\begin{tabular}{cccc|cccc}
\hline 
$\phantom{\frac{\frac{1}{1}}{\frac{1}{1}}}$$d=3$ & $\text{rMSE}$ & $\text{MND}$ & MMAD & $\phantom{\frac{\frac{1}{1}}{\frac{1}{1}}}$$d=4$ & $\text{rMSE}$ & $\text{MND}$ & MMAD\tabularnewline
\hline 
$\phantom{\frac{\frac{1}{1}}{\frac{1}{1}}}$$N=50$ & 0.0839 & 0.0724 & 0.0051 & $\phantom{\frac{\frac{1}{1}}{\frac{1}{1}}}$$N=50$ & 0.1119 & 0.1030 & 0.0091\tabularnewline
$\phantom{\frac{\frac{1}{1}}{\frac{1}{1}}}$$N=100$ & 0.0488 & 0.0417 & 0.0053 & $\phantom{\frac{\frac{1}{1}}{\frac{1}{1}}}$$N=100$ & 0.0692 & 0.0647 & 0.0038\tabularnewline
$\phantom{\frac{\frac{1}{1}}{\frac{1}{1}}}$$N=200$ & 0.0334 & 0.029 & 0.0043 & $\phantom{\frac{\frac{1}{1}}{\frac{1}{1}}}$$N=200$ & 0.0543 & 0.0523 & 0.0038\tabularnewline
\hline 
\end{tabular}{\footnotesize\par}
\end{table}
\par\end{center}

The left part of Table \ref{tab:C5_Sim_varyND} shows the performance
of our estimator when $N$ changes and $d$ is fixed at 3. When $N$
increases, rMSE, MND and sum of absolute bias all show moderate decline
in magnitude, indicating the performance is improving. Similar pattern
is also observed for $d=4$. This is intuitive because with more individuals
in the sample, one can achieve more accurate estimation of $\rho_{i}\left(\cd\right)$
and calculation of $\widehat{Q}\left(\beta\right)$. Moreover, we
can see even with a relatively small sample size of $N=50$, the rMSE
is 0.0839 when $d=3$ and 0.1119 when $d=4$, showing that our method
is informative and accurate. When $N=200$, the performance is very
good, with rMSE being as small as 0.0334 and 0.0543 for $d=3$ and
$d=4$, respectively. When we fix $N$ and compare between $d=3$
and $d=4$, it is clear the increase in $d$ adversely affects the
performance of our estimator, with rMSE and MND increasing for each
$N$. Overall, Table \ref{tab:C5_Sim_varyND} provides evidence that
our method is able to estimate $\beta_{0}$ accurately even with a
small sample size.

\subsection{Results Varying $corr$}

Correlation between observable characteristics $X$ and unobservable
fixed effect $A$ is important in network formation models. We show
how our estimator performs when the correlation between $X$ and $A$
varies. Recall that we constructed $A_{i}$ as 
\begin{equation}
A_{i}=corr\times X_{i,1}+\left(1-corr\right)\times\xi_{i}.\label{eq:sim_A}
\end{equation}
We maintain the baseline DGP for $\left(D,X,w,A,\epsilon,\xi,\b_{0}\right)$
as well as $\left(N,M,d\right)$ as in Section \ref{subsec:Setup-of-Simulation}
and vary $corr$ from 0.20 to 0.90. Results are summarized in Table
\ref{tab:C5_Sim_vary_corr}.
\begin{center}
\begin{table}[h]
\caption{\label{tab:C5_Sim_vary_corr}Results Varying $corr$}

\bigskip{}

\noindent \centering{}{\footnotesize{}}%
\begin{tabular}{cccc}
\hline 
$\phantom{\frac{\frac{1}{1}}{\frac{1}{1}}}$$corr$ & rMSE & MND & MMAD\tabularnewline
\hline 
$\phantom{\frac{\frac{1}{1}}{\frac{1}{1}}}$0.20 & 0.0488 & 0.0417 & 0.0053\tabularnewline
$\phantom{\frac{\frac{1}{1}}{\frac{1}{1}}}$0.50 & 0.0489 & 0.0435 & 0.0186\tabularnewline
$\phantom{\frac{\frac{1}{1}}{\frac{1}{1}}}$0.75 & 0.0763 & 0.0690 & 0.0506\tabularnewline
$\phantom{\frac{\frac{1}{1}}{\frac{1}{1}}}$0.90 & 0.1010 & 0.0951 & 0.0743\tabularnewline
\hline 
\end{tabular}{\footnotesize\par}
\end{table}
\par\end{center}

It can be seen from Table \ref{tab:C5_Sim_vary_corr} that even though
increase in $corr$ adversely affects the performance of our estimator,
the magnitude of the impact is relatively small. For example, rMSE
only increases from 0.0488 to 0.1010 when $corr$ increase dramatically
from 0.20 to 0.90. Similar pattern is also observed using other performance
metrics. Therefore, our estimator is robust against correlation between
$X$ and $A$.

\subsection{Results under Asymmetric Pairwise Observable Characteristics}

Following the theoretical analysis in Section \ref{subsec:ID_asym},
we investigate how our method works when $w\left(X_{i},X_{j}\right)$
is asymmetric in $\left(X_{i},X_{j}\right)$, i.e., $W_{ij,h}:=w_{h}\left(X_{i},X_{j}\right)\neq w_{h}\left(X_{j},X_{i}\right)=:W_{ji,h}$
for at least one coordinate $h\in\left\{ 1,..,d\right\} $. To do
so, we let $W_{ij,d}=\abs{2X_{i,d}-X_{j,d}}\times\left(2/3\right)$
\footnote{The reason for multiplying 2/3 is to make the size of $W_{ij,d}$
similar to other coordinates of $W_{ij}$.} for the last dimension $d$, while setting $W_{ij,h}=\abs{X_{i,h}-X_{j,h}}$
for all other coordinates $h=1,..,d-1$ such that $W_{ij}\neq W_{ji}$
unless $\abs{X_{i,d}}=\abs{X_{j,d}}$, which is a probability zero
event. We maintain the DGP for $\left(X,D,A,\epsilon,\xi,corr,\b_{0}\right)$
as in Section \ref{subsec:Setup-of-Simulation} and fix the number
of $\left(i,j\right)$ pairs $M$ at 1,000 for evaluation of $\widehat{Q}\left(\beta\right)$.
Finally, we vary $N$ and $D$ under the asymmetric $W_{ij}$ setting.
\begin{center}
\begin{table}[h]
\caption{\label{tab:C5_Sim_asymW}Results under Asymmetry}

\bigskip{}

\noindent \centering{}{\footnotesize{}}%
\begin{tabular}{cccc|cccc}
\hline 
$\phantom{\frac{\frac{1}{1}}{\frac{1}{1}}}$$d=3$ & $\text{rMSE}$ & $\text{MND}$ & MMAD & $\phantom{\frac{\frac{1}{1}}{\frac{1}{1}}}$$d=4$ & $\text{rMSE}$ & $\text{MND}$ & MMAD\tabularnewline
\hline 
$\phantom{\frac{\frac{1}{1}}{\frac{1}{1}}}$$N=50$ & 0.1498 & 0.1403 & 0.0936 & $\phantom{\frac{\frac{1}{1}}{\frac{1}{1}}}$$N=50$ & 0.2225 & 0.2124 & 0.1521\tabularnewline
$\phantom{\frac{\frac{1}{1}}{\frac{1}{1}}}$$N=100$ & 0.1096 & 0.1028 & 0.0741 & $\phantom{\frac{\frac{1}{1}}{\frac{1}{1}}}$$N=100$ & 0.1751 & 0.1695 & 0.1301\tabularnewline
$\phantom{\frac{\frac{1}{1}}{\frac{1}{1}}}$$N=200$ & 0.0943 & 0.0893 & 0.0672 & $\phantom{\frac{\frac{1}{1}}{\frac{1}{1}}}$$N=200$ & 0.1595 & 0.1555 & 0.1222\tabularnewline
\hline 
\end{tabular}{\footnotesize\par}
\end{table}
\par\end{center}

Table \ref{tab:C5_Sim_asymW} shows that our method performs reasonably
well when $W_{ij}$ is asymmetric. First, when the number of individuals
$N$ increases, the overall performance is improved, with rMSE decreasing
from 0.1498 to 0.0943 for $d=3$ when $N$ increases from 50 to 200
(similar pattern for $d=4$). This result is caused by the more information
available in the sample and echos what we have seen for the symmetric
$W_{ij}$ case. When $d$ increases from 3 to 4, the performance becomes
worse, with, for instance, rMSE increasing from 0.0943 to 0.1595 for
$N=200$. It shows that more data (information) is required for accurate
estimation when the dimension of $\beta_{0}$ is larger. Second, when
compared with the symmetric $W_{ij}$ case, the overall performance
under asymmetric $W_{ij}$ is generally not as good, with rMSE being
0.1498 for asymmetric $W_{ij}$ versus 0.0839 for symmetric $W_{ij}$
when $N=50$ and $d=3$. We have shown in Appendix \ref{subsec:ID_asym}
the identifying power of the objective function is in general ``less
restrictive'' than the corresponding identifying restriction in Lemma
\ref{lem:DNF1}. Therefore, one would expect larger bias than symmetric
$W_{ij}$ case, which is consistent with what we see in Table \ref{tab:C5_Sim_asymW}.
Based on results in Table \ref{tab:C5_Sim_asymW}, when $W_{ij}$
is asymmetric and computational power is sufficient, we recommend
increasing $M$, the number of $\left(i,j\right)$ pairs to evaluate
$\widehat{Q}\left(\b\right)$, to improve performance.
\end{document}